\documentclass[12pt,twoside]{amsart}
\usepackage{mathrsfs,amsthm,amscd,amsmath,amssymb}
\usepackage[colorlinks,linkcolor=green,citecolor=blue, pdfstartview=FitH]{hyperref}
\usepackage{amscd}
\usepackage{actuarialsymbol}
\usepackage{amsfonts}
\usepackage{graphicx}
\usepackage{caption, subcaption}

\title
[Distortion risk measures of step-weighted distribution]
{Distortion risk measures of step-weighted distribution}
\author{Chunle Huang}
\address{Chunle Huang, School of Mathematics, Hunan University, Changsha, 410082, China.}
\email{2019044@hnu.edu.cn; 402961544@qq.com}

\newtheorem{thm}{Theorem}[section]
\newtheorem{lem}[thm]{Lemma}
\newtheorem{cor}[thm]{Corollary}

\newtheorem{prop}[thm]{Proposition}
\newtheorem{prob}[thm]{Problem}

\theoremstyle{definition}
\newtheorem{defn}[thm]{Definition}

\newtheorem{ex}[thm]{Example}
\makeatletter
\let\uppercasenonmath\@gobble
\makeatother
\begin{document}
\bibliographystyle{amsalpha+}
\maketitle
\begin{abstract} 
In this note, we study distortion risk measures of step-weighted distribution. Specifically, let $A = (a_1, ..., a_m)^{'}$ and $Q = (q_1, ..., q_m)^{'}$ be two numerical vectors with $\min(a_1, ..., a_m) \geq 0, \max(a_1, ..., a_m) > 0$ and $0 = q_0 < q_1 < q_2 < ... < q_m = 1$. Then we show that for any distortion function $g$ there exists a distortion function $g_{A, Q}$ such that for any continuous random variables $X, X_1, ..., X_n$ it holds that 
\begin{equation} \label{7468}
\rho_g[X_{A, Q}] = \rho_{g_{A, Q}}[X]
\end{equation} and 
\begin{equation} \label{7471}
\rho_g[(S^c)_{A, Q}] = \rho_{g_{A, Q}}[X_1] + ... + \rho_{g_{A, Q}}[X_n]
\end{equation}
where $X_{A, Q}$ and $(S^c)_{A, Q}$ are the step-weighted versions of $X$ and $S^c$ associated with the vectors $A$ and $Q$ and $S^c$ is the comonotonic sum of $X_1, ..., X_n$. (\ref{7468}) gives a way to compute distortion risk measures of the step-weighted distribution while (\ref{7471}) extends the well-known additivity of distortion risk measures for comonotonic sums to the case of step-weighted comonotonic sums of continuous random variables. 
\end{abstract}

\tableofcontents 
\section{Introduction}
A risk measure is by definition a mapping from the set of random variables, usually representing the risks at hand, to the set of real numbers $\mathbb{R}$. One fundamental risk measure is VaR, which has a simple interpretation in terms of overshoot or undershoot probabilities. However, a single VaR at a predetermined level $p$ does not provide any information about the thickness of the upper tail of the distribution function under consideration. Another fundamental risk measure is TVaR, which is the arithmetic average of the quantiles of a random variable $X$ from $p$ on, so it is always larger than the corresponding quantile, resulting in prudent decisions. However, the TVaR risk measure only uses the upper tail of the distribution. Hence, this risk measure does not create incentive for taking actions that increase the distribution function for outcomes smaller than $Q_p$. For more details about these two fundamental measures the interested reader is referred to \cite{DDGK}, \cite{Dhaene et al. (2006)}, \cite{GKV1990} and \cite{KGDD2008}. 

Wang in \cite{W1996} and \cite{Wang2000} independently introduced a family of risk measures as a possible solution to these problems, which is now called the class of distortion risk measures,  by using the concept of distortion function as introduced in Yaari's dual theory of choice under risk; see also \cite{W2001} and \cite{WY1998}. In the literature, the distortion risk measure associated with a distortion function $g$ is usually denoted by $\rho_g[X]$ for any random variable $X$. Note that the distortion function $g$ is assumed to be independent of the distribution function of the random variable $X$. Much attention has been paid to the class of distortion risk measures since Wang's work; see for instance \cite{DDV1999}, \cite{Dhaene et al. (2012)}, \cite{Dhaene et al. (2006)} and \cite{H2004}. One reason for its growing popularity is that by choosing specific distortion functions one can easily obtain VaR and TVaR, in other words, the class of distortion risk measures includes both VaR and TVaR as special cases. Another reason is the correspondence between the two common theories of choice under risk, namely the expected utility theory and Yaari's dual theory of choice under risk. For instance, one can express risk-averse decision makers using the concept of distortion function which, in turn, defines a particular type of distortion risk measures. The third reason is that one can express some of the popular stochastic orderings among risks, including the stop-loss and convex orderings, in terms of distortion risk measures.  

In this paper, we investigate distortion risk measures of step-weighted distribution. In particular, we consider the following problem. 
\begin{prob} \label{7486}
Let $X$ be a random variable, $\omega(x)$ a nonnegative weight function such that $0 < E[\omega(X)] < \infty$ and ${X_\omega}$ the weighted version of $X$ associated with $\omega$. Then for any distortion function $g$, can we find a distortion function $g_{\omega}$ such that $$\rho_g[X_\omega] = \rho_{g_\omega}[X]?$$
\end{prob}
\noindent In general, Problem \ref{7486} may be difficult to investigate since the weighted distribution of a random variable $X$ associated with a general weight function $\omega$ does not admit explicit analytical expressions. Hence, in this note, we focus our attention on the class of step-weighted distributions, which means that the weight function is a step function. This class of weighted distributions admits analytical expressions for their distribution functions and has been shown useful when considering comonotonic and moment matching approximations for sums of lognormal random variables. For properties and applications of step-weighted distributions the interested reader is referred to \cite{Huang2026}. The main result of this paper is the following theorem. 

\begin{thm}\label{7491} 
Let $A = (a_1, ..., a_m)^{'}$ and $Q = (q_1, ..., q_m)^{'}$ be two numerical vectors with $\min(a_1, ..., a_m) \geq 0, \max(a_1, ..., a_m) > 0$ and $0 = q_0 < q_1 < q_2 < ... < q_m = 1$. Then for any distortion function $g$ there exists a distortion function $g_{A, Q}$ such that for any continuous random variable $X$ it holds that $$\rho_g[X_{A, Q}] = \rho_{g_{A, Q}}[X]$$ where $X_{A, Q}$ is the step-weighted version of $X$ associated with $A$ and $Q$. 
\end{thm}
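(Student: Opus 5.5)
The plan is to show that the survival function of $X_{A,Q}$ is obtained from that of $X$ by composing with a fixed piecewise linear map that depends only on $A$ and $Q$, not on $X$. Once this holds, the composition of $g$ with that map is the required distortion function. I use the definition of the step-weighted version from \cite{Huang2026}: with $F_X$ the distribution function of $X$, the weight is $\omega(x)=a_i$ for $F_X^{-1}(q_{i-1})<x\le F_X^{-1}(q_i)$, $i=1,\dots,m$.

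\textbf{Step 1: the distribution function of $X_{A,Q}$.} Since $X$ is continuous, $P\big(F_X^{-1}(q_{i-1})<X\le F_X^{-1}(q_i)\big)=q_i-q_{i-1}$. Hence
$$c:=E[\omega(X)]=\sum_{i=1}^m a_i(q_i-q_{i-1}),$$
and $0<c<\infty$ by the hypotheses $\min_i a_i\ge 0$ and $\max_i a_i>0$. Next compute $F_{X_{A,Q}}(x)=c^{-1}E[\omega(X)\mathbf 1\{X\le x\}]$. For $x$ with $F_X(x)=u\in[q_{i-1},q_i]$, continuity of $F_X$ gives
$$F_{X_{A,Q}}(x)=h(F_X(x)),\qquad h(u)=\frac1c\Big[\sum_{j<i}a_j(q_j-q_{j-1})+a_i(u-q_{i-1})\Big],\quad u\in[q_{i-1},q_i].$$
The function $h$ is continuous, nondecreasing and piecewise linear on $[0,1]$, with $h(0)=0$ and $h(1)=1$, and it depends only on $A$ and $Q$.

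\textbf{Step 2: define $g_{A,Q}$ and check it is a distortion function.} From Step 1, $\bar F_{X_{A,Q}}(x)=1-h\big(1-\bar F_X(x)\big)$. So I set
$$g_{A,Q}(s)=g\big(1-h(1-s)\big),\qquad s\in[0,1].$$
This function is nondecreasing, as the composition of nondecreasing maps ($s\mapsto 1-h(1-s)$ is nondecreasing). It also satisfies $g_{A,Q}(0)=g(1-h(1))=g(0)=0$ and $g_{A,Q}(1)=g(1-h(0))=g(1)=1$. Thus it is a distortion function, and it is independent of $X$.

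\textbf{Step 3: conclude.} Substitute into the standard representation
$$\rho_g[Y]=-\int_{-\infty}^0\big[1-g(\bar F_Y(x))\big]\,dx+\int_0^\infty g(\bar F_Y(x))\,dx$$
with $Y=X_{A,Q}$. Since $g(\bar F_{X_{A,Q}}(x))=g_{A,Q}(\bar F_X(x))$ pointwise in $x$, the two integrals coincide term by term with those defining $\rho_{g_{A,Q}}[X]$.

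\textbf{Main obstacle.} The main care is in Step 1: correctly identifying $F_{X_{A,Q}}$ as $h\circ F_X$. This uses continuity of $X$, so that each quantile interval carries mass exactly $q_i-q_{i-1}$ and there are no atoms at the breakpoints. It also requires handling indices with $a_i=0$, where $h$ is flat, which is harmless. If the paper's definition places the weight on half-open intervals differently, this changes only null sets and leaves the argument unaffected.
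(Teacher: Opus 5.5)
Your proposal is correct and follows essentially the same route as the paper: your $h$ is exactly the paper's $G_{A,Q}$, your $g_{A,Q}(s)=g(1-h(1-s))$ is the paper's transformed distortion function, and Step 3 is the same pointwise substitution into the definition of $\rho_g$. The only difference is that you verify $F_{X_{A,Q}}=G_{A,Q}\circ F_X$ directly, correctly noting that your endpoint convention for the weight intervals differs from the paper's only on null sets of $X$; the paper instead cites this identity as a lemma from the companion work on step-weighted distributions.
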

\noindent Theorem \ref{7491} states that associated with a distortion function $g$ the distortion risk measure $\rho_g[X_{A, Q}]$ of the step-weighted distribution of a continuous random variable $X$ can be expressed as a distortion risk measure of the distribution of the random variable $X$ corresponding to a transformed distortion function $g_{A, Q}$, so that we partly solve Problem \ref{7486}. 

As an application of Theorem \ref{7491} we obtain the following Theorem \ref{7497}, presenting a Lebesgue-Stieltjes integral representation for any distortion risk measure of the step-weighted distribution. Similar results can be found in \cite{Dhaene et al. (2012)} and \cite{Dhaene et al. (2006)}. 
\begin{thm} \label{7497}
Let $A = (a_1, ..., a_m)^{'}$ and $Q = (q_1, ..., q_m)^{'}$ be two numerical vectors with $\min(a_1, ..., a_m) \geq 0, \max(a_1, ..., a_m) > 0$ and $0 = q_0 < q_1 < q_2 < ... < q_m = 1$. Then for any distortion function $g$, there exist a left-continuous distortion function $g_{A, Q, L}$, a right-continuous distortion function $g_{A, Q, R}$, and two non-negative weights $c_{A, Q, L}, c_{A, Q, R}$ with $c_{A, Q, L} + c_{A, Q, R}= 1$ such that for any continuous random variable $X$ and its step-weighted version $X_{A, Q}$ it holds that $$\rho_g[X_{A, Q}] = c_{A, Q, L}\int^1_0F_X^{-1}(1 - q)dg_{A, Q, L}(q) + c_{A, Q, R}\int^1_0F_X^{-1+}(1 - q)dg_{A, Q, R}(q).$$
\end{thm}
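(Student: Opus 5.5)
Theorem \ref{7497} reduces to Theorem \ref{7491} combined with the known Lebesgue--Stieltjes representation of a general distortion risk measure. Given $g$, Theorem \ref{7491} provides a distortion function $g_{A,Q}$ with $\rho_g[X_{A,Q}] = \rho_{g_{A,Q}}[X]$ for every continuous $X$. It therefore suffices to show that an arbitrary distortion function $h$ (here $h = g_{A,Q}$), non-decreasing on $[0,1]$ with $h(0)=0$ and $h(1)=1$, admits
\[
\rho_h[X] = c_L\int_0^1 F_X^{-1}(1-q)\,dh_L(q) + c_R\int_0^1 F_X^{-1+}(1-q)\,dh_R(q),
\]
with $h_L$ left-continuous, $h_R$ right-continuous, and $c_L,c_R\ge 0$, $c_L+c_R=1$. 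This is exactly the representation in \cite{Dhaene et al. (2012)}, and I would either cite it or reprove it as follows.

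First, decompose $h$ into its left- and right-continuous parts. Since $h$ is monotone, it has at most countably many jumps. At each jump point $q$, write the jump $h(q+)-h(q-)$ as the sum of a left part $h(q)-h(q-)$ and a right part $h(q+)-h(q)$, with suitable conventions at $q=0$ and $q=1$. Collect all left parts, together with a fixed share (say one half) of the continuous part of $h$, into a left-continuous non-decreasing function $\tilde h_L$. Collect the right parts and the other half of the continuous part into a right-continuous $\tilde h_R$, so that $h = \tilde h_L + \tilde h_R$. Set $c_L = \tilde h_L(1)$ and $c_R = \tilde h_R(1)$, which gives $c_L+c_R=1$. Then normalize: take $h_L = \tilde h_L/c_L$ if $c_L>0$, and otherwise let $h_L$ be any left-continuous distortion function, since its weight is zero; define $h_R$ in the same way. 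These are distortion functions of the required continuity types.

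Next, verify the integral formula. Plug $h=\tilde h_L+\tilde h_R$ into $\rho_h[X] = -\int_{-\infty}^0[1-h(\bar F_X(x))]\,dx + \int_0^\infty h(\bar F_X(x))\,dx$, which is linear in $h$ once $h(1)=1$ is accounted for through the weights. The remaining step is the standard change of variables via the substitution $x = F_X^{-1}(1-q)$ or $F_X^{-1+}(1-q)$, integrating by parts:
\begin{itemize}
\item for a left-continuous $h_L$, $\int h_L(\bar F_X(x))\,dx$ becomes $\int_0^1 F_X^{-1}(1-q)\,dh_L(q)$;
\item for a right-continuous $h_R$, the same computation gives $\int_0^1 F_X^{-1+}(1-q)\,dh_R(q)$.
\end{itemize}

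The main obstacle is getting the correct quantile at the atoms of $dh$. At a jump of $h$ at level $q$, the value of $h(\bar F_X(x))$ on the set $\{x:\bar F_X(x)=q\}$ determines whether the lower or upper quantile appears. This set is an interval exactly when $F_X$ is flat, which can occur even for continuous $X$. I would check it by approximating the jump by steps and using $\{x: \bar F_X(x) > q\}$ versus $\{x: \bar F_X(x)\ge q\}$ to identify $F_X^{-1}(1-q)$ versus $F_X^{-1+}(1-q)$. Integrability issues would be handled by assuming $\rho_g[X_{A,Q}]$ is finite, as is implicit in the statement.
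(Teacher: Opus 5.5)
Your proposal is correct and follows the paper's proof: Theorem~\ref{7491} turns $\rho_g[X_{A,Q}]$ into $\rho_{g_{A,Q}}[X]$, and then the decomposition $g_{A,Q}=c_{A,Q,L}\,g_{A,Q,L}+c_{A,Q,R}\,g_{A,Q,R}$ and the two quantile representations (Theorems 7, 6 and 4 of \cite{Dhaene et al. (2012)}) are applied to the single distortion function $g_{A,Q}$. In your optional reproof the one-sided jumps are assigned the wrong way round: a component that carries $h(q)-h(q-)$ while $h=\tilde h_L+\tilde h_R$ holds pointwise is right-continuous at $q$, so the jumps $h(q)-h(q-)$ (including the one at $1$) belong to $\tilde h_R$, and the jumps $h(q+)-h(q)$ (including the one at $0$) belong to $\tilde h_L$ --- as written, each component has the wrong continuity type and would pick up the wrong quantile at atoms of $dh$ where $F_X$ is flat.
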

\noindent We remark that in Theorem \ref{7497} if the distortion function $g$ is left-continuous then the two non-negative weights $c_{A, Q, L}$ and $c_{A, Q, R}$ can be chosen such that $c_{A, Q, L} = 1$ and $c_{A, Q, R} = 0$, see Section \ref{27357} and \cite{Dhaene et al. (2012)} for the details. In the same vein, if the distortion function $g$ is right-continuous then the two non-negative weights $c_{A, Q, L}$ and $c_{A, Q, R}$ can be chosen such that $c_{A, Q, L} = 0$ and $c_{A, Q, R} = 1$. Based on this observation, we obtain the following corollary, which simplifies the result of Theorem \ref{7497} and generalizes Theorem 6 of \cite{Dhaene et al. (2012)} to the case of step-weighted distributions. 

\begin{cor} \label{74102}
Let $A = (a_1, ..., a_m)^{'}$ and $Q = (q_1, ..., q_m)^{'}$ be two numerical vectors with $\min(a_1, ..., a_m) \geq 0, \max(a_1, ..., a_m) > 0$ and $0 = q_0 < q_1 < q_2 < ... < q_m = 1$. Then for any left-continuous distortion function $g$, there exists a left-continuous distortion function $g_{A, Q}$ such that for any continuous random variable $X$ and its step-weighted version $X_{A, Q}$ it holds that 
$$\rho_g[X_{A, Q}] = \int^1_0F_X^{-1}(1 - q)dg_{A, Q}(q).$$
\end{cor}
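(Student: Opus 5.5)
We derive the corollary from Theorem \ref{7491} combined with the known integral representation for left-continuous distortion functions (Theorem 6 of \cite{Dhaene et al. (2012)}). Since Theorem \ref{7491} already gives $\rho_g[X_{A,Q}] = \rho_{g_{A,Q}}[X]$, it is enough to check two things. First, $g_{A,Q}$ is left-continuous whenever $g$ is. Second, for a left-continuous distortion function $h$ and any $X$, $\rho_h[X] = \int_0^1 F_X^{-1}(1-q)\,dh(q)$.

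\textbf{Identifying $g_{A,Q}$.} For $X$ continuous, $U = F_X(X)$ is uniform on $(0,1)$. The step weight $\omega(x) = \sum_k a_k \mathbf{1}\{q_{k-1} < F_X(x) \le q_k\}$ gives $X_{A,Q}$ the decumulative function
$$\bar F_{X_{A,Q}}(x) = \frac{\int_{F_X(x)}^1 w(u)\,du}{\int_0^1 w(u)\,du} =: \phi_{A,Q}\bigl(\bar F_X(x)\bigr),$$
where $w = \sum_k a_k \mathbf{1}_{(q_{k-1},q_k]}$ and $\phi_{A,Q}(t) = \int_{1-t}^1 w\,du \big/ \sum_k a_k(q_k - q_{k-1})$. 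The map $\phi_{A,Q}$ is continuous, nondecreasing and piecewise linear, with $\phi_{A,Q}(0)=0$ and $\phi_{A,Q}(1)=1$. The construction in the proof of Theorem \ref{7491} should then be $g_{A,Q} = g\circ\phi_{A,Q}$, because $\rho_g[X_{A,Q}]$ is computed from $g(\bar F_{X_{A,Q}}) = (g\circ\phi_{A,Q})(\bar F_X)$.

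\textbf{Left-continuity.} Since $\phi_{A,Q}$ is continuous and nondecreasing, take $t_n \uparrow t$. Then $\phi_{A,Q}(t_n) \uparrow \phi_{A,Q}(t)$, possibly with eventual equality. Left-continuity of $g$, together with monotonicity, gives $g(\phi_{A,Q}(t_n)) \to g(\phi_{A,Q}(t))$. Hence $g_{A,Q}$ is a left-continuous distortion function.

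\textbf{Integral representation.} We apply Theorem 6 of \cite{Dhaene et al. (2012)}, which states that for left-continuous $h$ one has $\rho_h[X] = \int_0^1 F_X^{-1}(1-q)\,dh(q)$. Alternatively, this follows from the first remark after Theorem \ref{7497} with $c_{A,Q,L}=1$. If a self-contained argument is wanted, the steps are as follows:
\begin{enumerate}
\item Split $\rho_h[X]$ into its positive and negative parts, $\int_0^\infty h(\bar F_X(x))\,dx - \int_{-\infty}^0 [1-h(\bar F_X(x))]\,dx$.
\item Write $h(\bar F_X(x)) = \int_{[0,1]} \mathbf{1}\{q \le \bar F_X(x)\}\,dh(q)$, using the Lebesgue--Stieltjes measure of the left-continuous $h$ with the appropriate endpoint convention.
\item Apply Fubini.
\item Use the equivalence $q \le \bar F_X(x) \iff x < F_X^{-1}(1-q)$, which is exactly where left-continuity matters: it decides which endpoint carries the atoms of $dh$.
\end{enumerate}

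The main obstacle is this endpoint bookkeeping in step 4. Atoms of $dh$ must be paired with the left quantile $F_X^{-1}$ rather than $F_X^{-1+}$, and the equivalence must be checked at jump points of $h$. Continuity of $X$ does not remove this issue, because $F_X$ may still have flat stretches. Everything else follows directly from Theorem \ref{7491}.
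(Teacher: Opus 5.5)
Your main line is correct and is essentially the paper's proof. You apply Theorem~\ref{1230220} with $g_{A,Q}=g\circ\overline{G}_{A,Q}$ (your $\phi_{A,Q}$ is exactly $\overline{G}_{A,Q}$), note that left-continuity is inherited because $\overline{G}_{A,Q}$ is continuous and nondecreasing, and then use Theorem~6 of Dhaene et al.\ (2012); the paper reaches that same theorem via Theorem~\ref{111575} with $c_{A,Q,L}=1$.

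One caution on your optional self-contained sketch. For left-continuous $h$ the correct identities are $h(s)=\int_{[0,1]}\mathbf{1}\{q<s\}\,dh(q)$ and $q<\bar F_X(x)\iff x<F_X^{-1}(1-q)$, both with strict inequality. Your steps~2 and~4 use the non-strict versions, and each is false: the first at jumps of $h$, the second at $q=\bar F_X(x)$. They combine to the right formula only because the two errors cancel.
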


As another application of Theorem \ref{7491} we obtain the following result considering distortion risk measures of the step-weighted comonotonic sum of continuous random variables. 
\begin{thm} \label{74107}
Let $A = (a_1, ..., a_m)^{'}$ and $Q = (q_1, ..., q_m)^{'}$ be two numerical vectors with $\min(a_1, ..., a_m) \geq 0, \max(a_1, ..., a_m) > 0$ and $0 = q_0 < q_1 < q_2 < ... < q_m = 1$. Then, for any fixed distortion function $g$ there exists a distortion function $g_{A, Q}$ such that for any continuous random variables $X_1, ..., X_n$ we have that 
\begin{equation} \label{74110}
\rho_g[(S^c)_{A, Q}] = \sum^n_{i = 1}\rho_{g_{A, Q}}[X_i].
\end{equation}
\end{thm}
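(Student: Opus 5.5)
The plan is to reduce Theorem \ref{74107} to Theorem \ref{7491} and then use the additivity of distortion risk measures for comonotonic sums. Fix $g$ and let $g_{A,Q}$ be the distortion function provided by Theorem \ref{7491}. It depends only on $g$, $A$ and $Q$, not on the random variable. For continuous $X_1,\dots,X_n$, take the comonotonic sum
$$S^c=F_{X_1}^{-1}(U)+\dots+F_{X_n}^{-1}(U),$$
with $U$ uniform on $(0,1)$.

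The first step is to check that $S^c$ is itself a continuous random variable, so that Theorem \ref{7491} applies to it. Each $F_{X_i}^{-1}$ is non-decreasing. Since $X_i$ is continuous, $F_{X_i}^{-1}$ is not constant on any interval of positive length: a flat piece at value $c$ on an interval of length $\ell>0$ would give $P(X_i=c)\ge \ell$. Hence $\phi(u)=\sum_i F_{X_i}^{-1}(u)$ is a sum of non-decreasing functions, at least one of which is strictly increasing, so $\phi$ is strictly increasing on $(0,1)$. Therefore $P(S^c=s)=P(\phi(U)=s)=0$ for every $s$, and $F_{S^c}$ is continuous. This continuity step is the main point needing care, because Theorem \ref{7491} is stated only for continuous random variables.

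Next, apply Theorem \ref{7491} with $X=S^c$ to get $\rho_g[(S^c)_{A,Q}]=\rho_{g_{A,Q}}[S^c]$. Since $g_{A,Q}$ is a distortion function, the well-known comonotonic additivity of distortion risk measures (see \cite{Dhaene et al. (2006)}, \cite{Dhaene et al. (2012)}) gives
$$\rho_{g_{A,Q}}[S^c]=\sum_{i=1}^n \rho_{g_{A,Q}}[X_i].$$
Combining the two equalities yields (\ref{74110}).

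There is one subtlety in the additivity step. For general, possibly discontinuous, distortion functions, additivity may rely on the representation combining $F^{-1}$ and $F^{-1+}$. Both quantile versions are additive for comonotonic sums, as in Theorem \ref{7497}, and in any case additivity holds for all distortion functions in the cited literature. So I expect no obstacle beyond verifying continuity of $S^c$. If needed, this could also be avoided by using the representation of Theorem \ref{7497} directly: one shows $F_{S^c}^{-1}=\sum_i F_{X_i}^{-1}$ and $F_{S^c}^{-1+}=\sum_i F_{X_i}^{-1+}$, and then uses linearity of the integrals.
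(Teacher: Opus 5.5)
Your proof is correct, but it applies the same two ingredients in the opposite order from the paper.

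The paper's route:
\begin{itemize}
\item It uses Lemma \ref{0103383}, imported from the companion paper without proof, to replace $(S^c)_{A,Q}$ by $(S_{A,Q})^c$ in distribution.
\item It then applies comonotonic additivity with the original $g$ to get $\sum_i\rho_g[(X_i)_{A,Q}]$.
\item Only at the end does it apply Theorem \ref{1230220} to each $X_i$.
\end{itemize}

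Your route applies Theorem \ref{7491} once, to $X=S^c$. You then use comonotonic additivity for the transformed distortion $g_{A,Q}$. That step is legitimate, since additivity holds for every distortion function, whether continuous or not.

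The only extra input you need is that $S^c$ is continuous, and your argument for it is sound. Continuity of $F_{X_i}$ makes $F^{-1}_{X_i}$ strictly increasing on $(0,1)$, so $\sum_i F^{-1}_{X_i}(U)$ has no atoms. The paper also needs this fact, because $(S^c)_{A,Q}$ is only defined for continuous $S^c$, but it never states it.

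What your route buys is independence from the interchange lemma; in fact it recovers that lemma. Combining your identity with Theorem \ref{1230220} for each $X_i$ and comonotonic additivity for $g$ gives $\rho_g[(S^c)_{A,Q}]=\rho_g[(S_{A,Q})^c]$ for every distortion $g$. Taking $g(q)=\mathbb{I}(q>1-p)$ for each $p\in(0,1)$ forces the quantile functions, and hence the distributions, to coincide.

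The paper's route puts in front the structural fact that step-weighting commutes with comonotonic summation. The cost is that it rests on a result proved elsewhere.
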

\noindent Theorem \ref{74107} states that any distortion risk measure $\rho_g[(S^c)_{A, Q}]$ of the step-weighted comonotonic sum $(S^c)_{A, Q}$ can be expressed as the sum of the distortion risk measures of the original random variables $X_1, ..., X_n$ corresponding to the transformed distortion function $g_{A, Q}$. It can be shown that when all $a_1, ..., a_n$ are equal to 1 the equation in (\ref{74110}) reduces to $\rho_g[S^c] = \sum^n_{i = 1}\rho_{g}[X_i]$. Therefore, our result extends the well-known additivity of distortion risk measures for comonotonic sums to the case of step-weighted comonotonic sums of continuous random variables. 

The remainder of this paper is organized as follows. Section \ref{74115} gives some of the relevant notations and definitions. Section \ref{26112} briefly introduces the new concept of step-weighted distribution. In Section \ref{27357} we investigate distortion risk measures of step-weighted distribution and prove Theorem \ref{7491}. In Section \ref{26538} we consider distortion risk measures of step-weighted comonotonic sums and prove Theorem \ref{74107}. 
\section{Preliminaries} \label{74115}
In this section we fix several relevant notations and definitions which will be used in the following sections, including quantiles functions of random variables and the concept of distortion risk measures. All random variables are defined on a common probability space $(\Omega, \mathcal{F}, \mathbb{P})$ and it is always assumed that all random variables are such that the risk measures introduced hereafter are finite. The cumulative distribution function (cdf) of a random variable $X$ is denoted by $F_X$. The left inverse of $F_X$ is denoted by $F^{-1}_X$, and is defined as $$F^{-1}_X(p) = \inf\{x \in \mathbb{R} \,\,|\,\, F_X(x) \geq p\}, \,\, p \in [0, 1]$$ with the convention $\inf \emptyset = +\infty$. The right inverse of $F_X$ is denoted by $F^{-1+}_X$, and is defined as $$F^{-1+}_X(p) = \sup\{x \in \mathbb{R} \,\, | \,\, F_X(x) \leq p\}, \,\, p \in [0, 1]$$ with the convention $\sup\emptyset = -\infty$. If $F_X$ is continuous and strictly increasing, the left and the right inverses are equal. Otherwise, on horizontal segments of $F_X$, the inverses $F^{-1}_X(p)$ and $F^{-1+}_X(p)$ are different and they satisfy $F^{-1}_X(p) < F^{-1+}_X(p)$. In such cases, the generalized $\alpha$-inverse $F^{-1(\alpha)}_X$ for any $\alpha \in [0, 1]$ will be useful, which is defined as: 
$$F^{-1(\alpha)}_X(p) = \alpha F^{-1}_X(p) + (1- \alpha) F^{-1+}_X(p), \,\, p \in [0, 1].$$ For more details on generalized inverses we recommend the interested reader to refer to \cite{DDGK}, \cite{Dhaene et al. (2002a)}, \cite{EH2013} and \cite{KRK2015} among others. 

Next, we recall the class of distortion risk measures, introduced by Wang \cite{W1996}. The quantile risk measure and TVaR belong to this class. The expectation of $X$ if it exists, can be written as 
$$E[X] = -\int^0_{-\infty}[1 - \overline{F}_X(x)]dx + \int^{\infty}_0 \overline{F}_X(x)dx.$$ Wang \cite{W1996} defines a family of risk measures by using the concept of distortion function as introduced in Yaari's dual theory of choice under risk; see also \cite{WY1998}. A distortion function is defined as a non-decreasing function $g: [0, 1] \to [0, 1]$ such that $g(0) = 0$ and $g(1) = 1$. The distortion risk measure associated with a distortion function $g$ is denoted by $\rho_g[X]$ and is defined by 
$$\rho_g[X] = -\int^0_{-\infty}[1 - g(\overline{F}_X(x))]dx + \int^{\infty}_0 g(\overline{F}_X(x))dx$$ 
for any random variable $X$. Note that the distortion function $g$ is assumed to be independent of the distribution function of the random variable $X$. The identity distortion function $g(q) = q$ corresponds to $E[X]$. Also note that $g_1(q) \leq g_2(q)$ for all $q \in [0, 1]$ implies that $\rho_{g_1}[X] \leq \rho_{g_2}[X]$. We remark that Dhaene et al in \cite{Dhaene et al. (2012)} and \cite{Dhaene et al. (2006)} recently obtained Lebesgue-Stieltjes integral representations of distortion risk measures, see Theorem 4 and Theorem 6 of \cite{Dhaene et al. (2012)}, which is of interest and very useful because it enables us to treat distortion risk measures in a very simple manner. In particular, using this representation they fully proved the well-known additivity of distortion risk measures for comonotonic risks. 

\section{Step-weighted distribution} \label{26112} 
In this section, we briefly introduce the concept of step-weighted distribution. For more properties and applications of step-weighted distributions the interested reader is referred to \cite{Huang2026}. 
\begin{defn} \label{52378}
Let $A = (a_1, ..., a_m)^{'}$ and $Q = (q_1, ..., q_m)^{'}$ be two numerical vectors with $\min(a_1, ..., a_m) \geq 0, \max(a_1, ..., a_m) > 0$ and $0 = q_0 < q_1 < q_2 < ... < q_m = 1$. Let $X$ be a continuous random variable which means that the distribution function $F_X(x)$ of $X$ is continuous on $(-\infty, \infty)$. Then we call $\omega_{A, Q, X}(x)$ the step-weight function of the random variable $X$ associated with the numerical vectors $A$ and $Q$, abbreviated as $\omega_{A, Q}$ when there is no ambiguity, if it is defined by 
$$\omega_{A, Q, X}(x) = \begin{cases} a_1, & F^{-1+}_{X}(q_0) \leq x < F_X^{-1}(q_1), \\  
a_k, & F_X^{-1}(q_{k - 1}) \leq x < F_X^{-1}(q_k), \,\, k = 2, ..., m - 1, \\
a_m, & F_X^{-1}(q_{m - 1}) \leq x \leq F^{-1}_{X}(q_m), \\ 
0, & \text{otherwise}. 
\end{cases}$$
\end{defn}

The following proposition shows that the expectation $E[\omega_{A, Q}(X)]$ always exists for any continuous random variable $X$ and it is, indeed, independent of the distribution function of $X$. 
\begin{prop} \label{123084}
Let $A = (a_1, ..., a_m)^{'}$ and $Q = (q_1, ..., q_m)^{'}$ be two numerical vectors with $\min(a_1, ..., a_m) \geq 0, \max(a_1, ..., a_m) > 0$ and $0 = q_0 < q_1 < q_2 < ... < q_m = 1$. Let $X$ be a continuous random variable and $\omega_{A, Q}$ the step-weight function of $X$. Then it holds that $$0 < E[\omega_{A, Q}(X)] = \sum^m_{i = 1}a_i(q_{i} - q_{i - 1}) < \infty.$$ In particular, the expectation $E[\omega_{A, Q}(X)]$ is independent of the distribution of $X$. 
\end{prop}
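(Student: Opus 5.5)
The plan is to compute $E[\omega_{A,Q}(X)]$ by splitting it over the $m$ disjoint intervals on which $\omega_{A,Q}$ is constant. Writing $I_1 = [F^{-1+}_X(q_0), F^{-1}_X(q_1))$, $I_k = [F^{-1}_X(q_{k-1}), F^{-1}_X(q_k))$ for $2 \le k \le m-1$, and $I_m = [F^{-1}_X(q_{m-1}), F^{-1}_X(q_m)]$, Definition \ref{52378} gives
$$E[\omega_{A,Q}(X)] = \sum_{k=1}^m a_k\, \mathbb{P}(X \in I_k).$$
It therefore suffices to show that $\mathbb{P}(X \in I_k) = q_k - q_{k-1}$ for every $k$.

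\textbf{Interior intervals.} The main tool is continuity of $F_X$. For $p \in (0,1)$ the left inverse is finite, and continuity gives $F_X(F^{-1}_X(p)) = p$. Continuity also implies that $X$ has no atoms. Hence for $2 \le k \le m-1$,
$$\mathbb{P}(X \in I_k) = F_X(F^{-1}_X(q_k)) - F_X(F^{-1}_X(q_{k-1})) = q_k - q_{k-1}.$$
The same identity applies at any endpoint lying strictly inside $(0,1)$.

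\textbf{Boundary intervals.} The endpoints $q_0 = 0$ and $q_m = 1$ are where I expect the main (mild) obstacle, since the inverses there may be infinite.
\begin{itemize}
\item[] At $q_0=0$: $F^{-1+}_X(0) = \sup\{x : F_X(x) \le 0\}$ is the left end of the support, possibly $-\infty$. In either case $\mathbb{P}(X < F^{-1+}_X(0)) = 0$, because $F_X$ vanishes to the left of this point, and continuity removes any atom there.
\item[] At $q_m=1$: $F^{-1}_X(1) = \inf\{x : F_X(x) \ge 1\}$ is the right end of the support, possibly $+\infty$. In either case $\mathbb{P}(X \le F^{-1}_X(1)) = 1$.
\end{itemize}
When an endpoint is infinite, the interval should be read as the corresponding half-line, and the probability statements above still hold. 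These give $\mathbb{P}(X \in I_1) = q_1$ and $\mathbb{P}(X \in I_m) = 1 - q_{m-1}$. The degenerate case $m=1$, where $I_1$ is the whole support with probability $1$, is handled the same way.

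\textbf{Conclusion.} Summing the pieces yields
$$E[\omega_{A,Q}(X)] = \sum_{i=1}^m a_i(q_i - q_{i-1}).$$
This value is finite because it is a finite sum. It is strictly positive because every $a_i \ge 0$, every $q_i - q_{i-1} > 0$, and at least one $a_i > 0$. The expression involves only $A$ and $Q$, so it is independent of the distribution of $X$.
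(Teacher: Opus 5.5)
Your proposal is correct and takes the same route as the paper, which only asserts that the result is immediate from Definition~\ref{52378} and continuity of $F_X$. You supply the omitted details: $F_X(F^{-1}_X(p))=p$ for $p\in(0,1)$, the absence of atoms, and the treatment of possibly infinite endpoints at $q_0=0$ and $q_m=1$. One implicit step is worth stating: the inequalities $F_X(F^{-1}_X(q_{k-1}))=q_{k-1}<q_k=F_X(F^{-1}_X(q_k))$ force the endpoints to be strictly increasing, so the $I_k$ are nonempty, consecutive and disjoint.
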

\begin{proof} Proposition \ref{123084} follows immediately from Definition \ref{52378} and the continuity assumption of $X$. 
\end{proof}
\begin{defn} \label{529103}
Let $A = (a_1, ..., a_m)^{'}$ and $Q = (q_1, ..., q_m)^{'}$ be two numerical vectors with $\min(a_1, ..., a_m) \geq 0, \max(a_1, ..., a_m) > 0$ and $0 = q_0 < q_1 < q_2 < ... < q_m = 1$. Let $X$ be a continuous random variable with distribution function $F_X(x)$. We call $X_{A, Q}$ the step-weighted version of $X$ associated with the numerical vectors $A$ and $Q$ if the distribution function $F_{X_{A, Q}}(x)$ of $X_{A, Q}$ is determined by $$dF_{X_{A, Q}}(x) = \frac{\omega_{A, Q}(x)}{E[\omega_{A, Q}(X)]}dF_X(x), \,\, \text{for} \,\, x \in \mathbb{R}.$$ Furthermore, the distribution of $X_{A, Q}$ will be called the step-weighted distribution of $X$ associated with the numerical vectors $A$ and $Q$. 
\end{defn}

\section{Distortion risk measures of step-weighted distribution} \label{27357}
This section is devoted to the study of distortion risk measures of step-weighted distribution. The main result is the following theorem which states that any distortion risk measure $\rho_g[X_{A, Q}]$ of the step-weighted distribution of a continuous random $X$ can be expressed in terms of a distortion risk measure of the distribution of the random variable $X$ corresponding to a transformed distortion function $g_{A, Q}$, which only depends on the distortion function $g$ and the two numerical vectors $A$ and $Q$.  

\begin{thm}\label{1230220} 
Let $A = (a_1, ..., a_m)^{'}$ and $Q = (q_1, ..., q_m)^{'}$ be two numerical vectors with $\min(a_1, ..., a_m) \geq 0, \max(a_1, ..., a_m) > 0$ and $0 = q_0 < q_1 < q_2 < ... < q_m = 1$. Then for any distortion function $g$ there exists a distortion function $g_{A, Q}$ such that for any continuous random variable $X$ it holds that $$\rho_g[X_{A, Q}] = \rho_{g_{A, Q}}[X]$$ where $X_{A, Q}$ is the step-weighted version of $X$ associated with the vectors $A$ and $Q$. 
\end{thm}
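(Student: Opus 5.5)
The plan is to compute the decumulative distribution function of $X_{A,Q}$ explicitly as a fixed function of $\overline{F}_X(x)$. Then $g_{A,Q}$ is just $g$ composed with that function.

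Write $c = E[\omega_{A,Q}(X)] = \sum_{i=1}^m a_i(q_i - q_{i-1})$, which by Proposition \ref{123084} is positive, finite and independent of $X$. Define the piecewise linear function $h_{A,Q}:[0,1]\to[0,1]$ by
$$h_{A,Q}(p) = \frac{1}{c}\int_0^p \omega^*(u)\,du, \qquad \omega^*(u) = a_k \ \text{ for } u \in [q_{k-1}, q_k),$$
so that $h_{A,Q}$ is continuous and non-decreasing (since $a_k \ge 0$), with $h_{A,Q}(0)=0$ and $h_{A,Q}(1)=1$.

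The first step is to show that $F_{X_{A,Q}}(x) = h_{A,Q}(F_X(x))$ for every continuous $X$. By Definition \ref{529103},
$$F_{X_{A,Q}}(x) = \frac{1}{c}E\big[\omega_{A,Q}(X)\mathbf{1}\{X\le x\}\big].$$
Take $U = F_X(X)$, which is uniform on $[0,1]$ because $F_X$ is continuous. Almost surely we have $\omega_{A,Q}(X) = \omega^*(U)$, since $F_X^{-1}(q_{k-1}) \le X < F_X^{-1}(q_k)$ corresponds to $q_{k-1} \le F_X(X) < q_k$ up to null sets. Likewise $\{X \le x\}$ and $\{U \le F_X(x)\}$ differ by a null set: where $F_X$ is flat, $X$ puts no mass. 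Together these give $F_{X_{A,Q}}(x) = \frac1c\int_0^{F_X(x)}\omega^*(u)\,du$.

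The careful handling of these null sets, at the boundary points $F_X^{-1}(q_k)$ and on flat pieces of $F_X$, is the only delicate point. I would argue it through the quantile representation $X \overset{d}{=} F_X^{-1}(U)$, using $F_X(F_X^{-1}(u)) = u$ for continuous $F_X$. That identity makes the events $\{F_X^{-1}(q_{k-1}) \le F_X^{-1}(U) < F_X^{-1}(q_k)\}$ and $\{q_{k-1}\le U<q_k\}$ agree up to null sets.

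The second step is to pass to decumulative functions:
$$\overline{F}_{X_{A,Q}}(x) = 1 - h_{A,Q}(1-\overline{F}_X(x)) =: \tilde h(\overline{F}_X(x)),$$
where $\tilde h(q) = 1-h_{A,Q}(1-q)$. This $\tilde h$ is again non-decreasing and satisfies $\tilde h(0)=0$, $\tilde h(1)=1$.

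Finally, set $g_{A,Q} = g\circ\tilde h$. It is non-decreasing with $g_{A,Q}(0)=g(0)=0$ and $g_{A,Q}(1)=1$, so it is a distortion function, and it depends only on $g$, $A$ and $Q$. Substituting $g(\overline{F}_{X_{A,Q}}(x)) = g_{A,Q}(\overline{F}_X(x))$ into the defining integrals of $\rho_g[X_{A,Q}]$ yields $\rho_{g_{A,Q}}[X]$ term by term, which proves the theorem.
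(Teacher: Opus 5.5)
Your proposal is correct and follows the paper's proof. Your $h_{A,Q}$ is exactly the paper's $G_{A,Q}$, your $\tilde h$ is its dual $\overline{G}_{A,Q}$, and $g_{A,Q}=g\circ\overline{G}_{A,Q}$ is the same transformed distortion; the only difference is that you prove $F_{X_{A,Q}}(x)=G_{A,Q}(F_X(x))$ directly via the probability integral transform (with sound null-set handling), whereas the paper cites it from earlier work as Lemma \ref{1231219}.
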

\begin{proof} 
Let $a = \sum^m_{i = 1}a_i(q_{i} - q_{i - 1})$. Then, by the given assumptions of the two numerical vectors $A$ and $Q$ we have that ${a} > 0$. Define 
\begin{equation} \label{73366}
G_{A, Q}(t) = a^{-1}\Big(\sum^{k - 1}_{i = 1}a_{i}(q_{i} - q_{i - 1}) + a_k(t - q_{k - 1})\Big), \,\, t \in [q_{k - 1}, q_{k}], k = 1, ..., m 
\end{equation}
with $\sum_{i = 1}^{0} = 0$ by convention. Then, it is easy to check that 
\begin{itemize} 
\item $G_{A, Q}(0) = 0$ and $G_{A, Q}(1) = 1$; 
\item $G_{A, Q}(t)$ is continuous and increasing in $t \in [0, 1]$. 
\end{itemize}
In other words, $G_{A, Q}(t)$ is a continuous distortion function on the unit interval $[0, 1]$. Let $g$ be any fixed distortion function and define 
\begin{equation} \label{1228218}
g_{A, Q}(t) = g(\overline{G}_{A, Q}(t)) = g(1 - G_{A, Q}(1 - t)), \,\, \text{for} \,\, t \in [0, 1]
\end{equation}
where $\overline{G}_{A, Q}(t)$ is the dual distortion function of $G_{A, Q}(t)$. Then, it is easy to check that $g_{A, Q}(t)$ is a distortion function, which only depends on the distortion function $g$ and the two numerical vectors $A$ and $Q$. We need the following lemma. For the proof see Lemma 3.7 and Theorem 3.8 of \cite{Huang2026}.
\begin{lem} \label{1231219}
Let $A = (a_1, ..., a_m)^{'}$ and $Q = (q_1, ..., q_m)^{'}$ be two numerical vectors with $\min(a_1, ..., a_m) \geq 0, \max(a_1, ..., a_m) > 0$ and $0 = q_0 < q_1 < q_2 < ... < q_m = 1$. Then, for any continuous random variable $X$ with distribution function $F_X(x)$ it holds that $$F_{X_{A, Q}}(x)  = G_{A, Q}(F_X(x)) \,\, \text{for any} \,\, x \in \mathbb{R}$$ where $X_{A, Q}$ is the step-weighted version of $X$ associated with the vectors $A$ and $Q$. 
\end{lem}
\noindent For any continuous random variable $X$ by Lemma \ref{1231219} we find that  
\begin{align*} 
\rho_g[X_{A, Q}] &= - \int^0_{-\infty}[1 - g(\overline{F}_{X_{A, Q}}(x))]dx + \int^{\infty}_0g(\overline{F}_{X_{A, Q}}(x))dx \\ 
& =  - \int^0_{-\infty}[1 - g(1 - G_{A, Q}(F_X(x)))]dx + \int^{\infty}_0g(1 - G_{A, Q}(F_X(x))dx \\ 
&=  - \int^0_{-\infty}[1 - g(\overline{G}_{A, Q}(\overline{F}_X(x)))]dx + \int^{\infty}_0g(\overline{G}_{A, Q}(\overline{F}_X(x)))dx \\ 
&= - \int^0_{-\infty}[1 - g_{A, Q}(\overline{F}_{X}(x))]dx + \int^{\infty}_0g_{A, Q}(\overline{F}_{X}(x))dx = \rho_{g_{A, Q}}[X] 
\end{align*}
where $g_{A, Q}(t)$ is the distortion function defined by (\ref{1228218}). This completes the proof. 
\end{proof}

As an application of Theorem \ref{1230220} we obtain the following result, which presents a Lebesgue-Stieltjes integral representation for any distortion risk measure of the step-weighted distribution. 
\begin{thm} \label{111575}
Let $A = (a_1, ..., a_m)^{'}$ and $Q = (q_1, ..., q_m)^{'}$ be two numerical vectors with $\min(a_1, ..., a_m) \geq 0, \max(a_1, ..., a_m) > 0$ and $0 = q_0 < q_1 < q_2 < ... < q_m = 1$. Then for any distortion function $g$, there exist a left-continuous distortion function $g_{A, Q, L}$, a right-continuous distortion function $g_{A, Q, R}$, and two non-negative weights $c_{A, Q, L}, c_{A, Q, R}$ with $c_{A, Q, L} + c_{A, Q, R}= 1$ such that for any continuous random variable $X$ and its step-weighted version $X_{A, Q}$ it holds that $$\rho_g[X_{A, Q}] = c_{A, Q, L}\int^1_0F_X^{-1}(1 - q)dg_{A, Q, L}(q) + c_{A, Q, R}\int^1_0F_X^{-1+}(1 - q)dg_{A, Q, R}(q).$$
\end{thm}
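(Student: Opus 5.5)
The plan is to combine Theorem \ref{1230220} with the Lebesgue--Stieltjes representation of Dhaene et al.\ (Theorem 4 of \cite{Dhaene et al. (2012)}). That result states that every distortion function $h$ can be written as a convex combination $h = c\, h_L + (1-c)\, h_R$. Here $h_L$ is a left-continuous distortion function, $h_R$ is a right-continuous one, and $c \in [0,1]$. For any random variable $X$ with finite risk measure one then has
$$\rho_h[X] = c\int_0^1 F_X^{-1}(1-q)\,dh_L(q) + (1-c)\int_0^1 F_X^{-1+}(1-q)\,dh_R(q).$$
The data $c, h_L, h_R$ depend only on $h$, not on $X$.

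First, I fix $g$ and form $g_{A,Q} = g\circ \overline{G}_{A,Q}$ as in (\ref{1228218}). Theorem \ref{1230220} gives $\rho_g[X_{A,Q}] = \rho_{g_{A,Q}}[X]$ for every continuous $X$. Since $g_{A,Q}$ is a distortion function determined by $g$, $A$ and $Q$ alone, I apply the Dhaene et al.\ decomposition with $h = g_{A,Q}$. I then set $g_{A,Q,L} = h_L$, $g_{A,Q,R} = h_R$, $c_{A,Q,L} = c$ and $c_{A,Q,R} = 1-c$. Substituting into the displayed formula gives the claimed identity, and it holds uniformly in $X$ because none of these objects depends on $X$.

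For completeness I would recall how the decomposition is built, since it is the only nontrivial ingredient. A non-decreasing $h$ on $[0,1]$ has at most countably many jumps. At each interior jump point $t$, split the jump into a left part $h(t)-h(t-)$ and a right part $h(t+)-h(t)$. If $h$ is already left- (resp.\ right-) continuous, take $c=1$ (resp.\ $c=0$). Otherwise, let $c$ be the total left jump mass divided by the total jump mass. Define $h_L$ by keeping the continuous part of $h$ and placing all jump mass as left-continuous jumps, and define $h_R$ in the same way with right-continuous jumps. Normalising so that $c\,h_L + (1-c)\,h_R = h$ then gives the decomposition. The Stieltjes formula is proved by writing $\rho_h$ through $\int F^{-1(\alpha)}_X(1-q)\,dh(q)$ and integrating by parts on the positive and negative half-lines.

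The main obstacle is exactly this bookkeeping: correctly matching the left and right quantiles to the left- and right-continuous components, and handling the endpoints $0$ and $1$. I would not redo it and would simply cite Theorem 4 of \cite{Dhaene et al. (2012)}. The remaining check is continuity of $\overline{G}_{A,Q}$. It is continuous and non-decreasing, and when all $a_i > 0$ it is strictly increasing. In that case composing with $g$ preserves the one-sided continuity of $g$, which is consistent with the remark following Theorem \ref{7497}.
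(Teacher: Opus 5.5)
Your argument is the paper's proof. Theorem~\ref{1230220} gives $\rho_g[X_{A,Q}]=\rho_{g_{A,Q}}[X]$, where $g_{A,Q}=g\circ\overline{G}_{A,Q}$ depends only on $g$, $A$ and $Q$. The Dhaene et al.\ (2012) decomposition $g_{A,Q}=c_{A,Q,L}\,g_{A,Q,L}+c_{A,Q,R}\,g_{A,Q,R}$, together with their Lebesgue--Stieltjes representations, then finishes it. The paper cites these as Theorems 7, 6 and 4 of that reference, not as a single Theorem 4.

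Because you cite that result rather than derive it, the proof stands. However, your ``for completeness'' sketch of the decomposition is wrong as written. If the whole jump $h(t+)-h(t-)$ goes into both $h_L$ and $h_R$ with one global $c$, then $c\,h_L+(1-c)\,h_R=h$ holds only if $(h(t+)-h(t))/(h(t+)-h(t-))=c$ at every jump point. This fails, e.g., for $h(q)=\tfrac12\mathbb{I}(q>0.3)+\tfrac12\mathbb{I}(q\ge 0.7)$: matching $h(0.3)$ forces $c=1$, while matching $h(0.7)$ forces $c=0$.

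The correct construction sends the right-jumps $h(t+)-h(t)$, including any jump at $0$, to the left-continuous component. It sends the left-jumps $h(t)-h(t-)$, including any jump at $1$, to the right-continuous component. It then splits the continuous part between the two and normalizes. In the example this gives $c=\tfrac12$, $h_L=\mathbb{I}(q>0.3)$ and $h_R=\mathbb{I}(q\ge 0.7)$.

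Finally, no continuity check on $\overline{G}_{A,Q}$ is needed for this statement. Strict monotonicity is not needed for the later corollaries either: composing a one-sidedly continuous $g$ with any continuous non-decreasing map preserves that one-sided continuity.
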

\begin{proof} 
Let $g$ be any fixed distortion function. By Theorem \ref{1230220} there exists a distortion function $g_{A, Q}$ defined by 
\begin{equation} \label{2611322}
g_{A, Q}(t) = g(\overline{G}_{A, Q}(t)) = g(1 - G_{A, Q}(1 - t)), \,\, \text{for} \,\, t \in [0, 1]
\end{equation}
such that for any continuous random variable $X$ and its step-weighted version $X_{A, Q}$ we have that 
\begin{equation} \label{11322}
\rho_g[X_{A, Q}] = \rho_{g_{A, Q}}[X].
\end{equation}
By Theorem 7 of \cite{Dhaene et al. (2012)}, we know that for the distortion function $g_{A, Q}$ there exist a left-continuous distortion function $g_{A, Q, L}$, a right-continuous distortion function $g_{A, Q, R}$, and two non-negative weights $c_{A, Q, L}, c_{A, Q, R}$ with $c_{A, Q, L} + c_{A, Q, R}= 1$ and $$g_{A, Q} = c_{A, Q, L} \times g_{A, Q
, L} + c_{A, Q, R} \times g_{A, Q, R}$$ such that  
\begin{equation*} 
\rho_{g_{A, Q}}[Y] = c_{A, Q, L} \times \rho_{g_{A, Q, L}}[Y] + c_{A, Q, R} \times \rho_{g_{A, Q, R}}[Y] 
\end{equation*} 
for any random variable $Y$. In particular, for the continuous random variable $X$, we have that 
\begin{equation} \label{1115104}
\rho_{g_{A, Q}}[X] = c_{A, Q, L} \times \rho_{g_{A, Q, L}}[X] + c_{A, Q, R} \times \rho_{g_{A, Q, R}}[X].  
\end{equation} 
For the left-continuous distortion function $g_{A, Q, L}$, by Theorem 6 of \cite{Dhaene et al. (2012)}, the distortion risk measure $\rho_{g_{A, Q, L}}[X]$ has the following Lebesgue-Stieltjes integral representation: 
\begin{equation} \label{1115108}
\rho_{g_{A, Q, L}}[X] = \int_{[0, 1]}F^{-1}_{X}(1 - q) dg_{A, Q, L}(q).
\end{equation}
Similarly, for the right-continuous distortion function $g_{A, Q, R}$, by Theorem 4 of \cite{Dhaene et al. (2012)}, the distortion risk measure $\rho_{g_{A, Q, R}}[X]$ has the following Lebesgue-Stieltjes integral representation: 
\begin{equation} \label{1115112}
\rho_{g_{A, Q, R}}[X] = \int_{[0, 1]}F^{-1+}_{X}(1 - q) dg_{A, Q, R}(q).
\end{equation}
Combining the equations (\ref{11322}), (\ref{1115104}), (\ref{1115108}) and (\ref{1115112}) gives $$\rho_g[X_{A, Q}] = c_{A, Q, L}\int^1_0F_X^{-1}(1 - q)dg_{A, Q, L}(q) + c_{A, Q, R}\int^1_0F_X^{-1+}(1 - q)dg_{A, Q, R}(q).$$ This completes the proof. 
\end{proof}

We remark that in Theorem \ref{111575} if the distortion function $g$ is left-continuous then by (\ref{2611322}) we know that $g_{A, Q}(t)$ is left-continuous as well since $\overline{G}_{A, Q}(t)$ is always continuous. In this case, the two non-negative weights $c_{A, Q, L}$ and $c_{A, Q, R}$ can be chosen such that $c_{A, Q, L} = 1$ and $c_{A, Q, R} = 0$, see \cite{Dhaene et al. (2012)} for the details. In the same vein, if the distortion function $g$ is right-continuous then the two non-negative weights $c_{A, Q, L}$ and $c_{A, Q, R}$ can be chosen such that $c_{A, Q, L} = 0$ and $c_{A, Q, R} = 1$. Based on this observation, we obtain the following two corollaries, both simplifying the result of Theorem \ref{111575}. 
\begin{cor} \label{1115119}
Let $A = (a_1, ..., a_m)^{'}$ and $Q = (q_1, ..., q_m)^{'}$ be two numerical vectors with $\min(a_1, ..., a_m) \geq 0, \max(a_1, ..., a_m) > 0$ and $0 = q_0 < q_1 < q_2 < ... < q_m = 1$. Then for any left-continuous distortion function $g$, there exists a left-continuous distortion function $g_{A, Q}$ such that for any continuous random variable $X$ and its step-weighted version $X_{A, Q}$ it holds that 
$$\rho_g[X_{A, Q}] = \int^1_0F_X^{-1}(1 - q)dg_{A, Q}(q).$$
\end{cor}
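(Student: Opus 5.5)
The plan is to take $g_{A, Q}$ to be the same transformed distortion function as in Theorem \ref{1230220}, namely $g_{A, Q}(t) = g(\overline{G}_{A, Q}(t)) = g(1 - G_{A, Q}(1 - t))$ from (\ref{2611322}). The argument then reduces to checking two things: that this function is left-continuous, and that the Lebesgue--Stieltjes representation for left-continuous distortions (Theorem 6 of \cite{Dhaene et al. (2012)}) applies to it. By Theorem \ref{1230220} we already have $\rho_g[X_{A, Q}] = \rho_{g_{A, Q}}[X]$ for every continuous random variable $X$. So it suffices to show
$$\rho_{g_{A, Q}}[X] = \int^1_0 F^{-1}_X(1-q)\,dg_{A, Q}(q).$$

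\textbf{Step 1: left-continuity of $g_{A, Q}$.} From (\ref{73366}), $G_{A, Q}$ is continuous and non-decreasing on $[0,1]$, with $G_{A, Q}(0)=0$ and $G_{A, Q}(1)=1$. Hence $h(t) := 1 - G_{A, Q}(1-t)$ is also a continuous non-decreasing distortion function. Let $t_n \uparrow t$ in $(0,1]$. Then $h(t_n) \uparrow h(t)$ by continuity and monotonicity of $h$.

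There are two cases.
\begin{itemize}
\item If $h(t_n) < h(t)$ for all $n$, then left-continuity of $g$ at $h(t)$ gives $g(h(t_n)) \to g(h(t))$.
\item If $h(t_n) = h(t)$ for some $n$, then by monotonicity $h$ is constant on $[t_n, t]$. So $g(h(t_k)) = g(h(t))$ for all large $k$.
\end{itemize}
Either way $g_{A, Q}(t_n) \to g_{A, Q}(t)$, so $g_{A, Q}$ is left-continuous. This monotone-composition case split is the only real point of care; the naive claim ``composition of a left-continuous function with a continuous one is left-continuous'' needs monotonicity of the inner map.

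\textbf{Step 2: apply the integral representation.} Since $g_{A, Q}$ is a left-continuous distortion function, Theorem 6 of \cite{Dhaene et al. (2012)} applies to it. This gives $\rho_{g_{A, Q}}[X] = \int_{[0,1]} F^{-1}_X(1-q)\,dg_{A, Q}(q)$ for any $X$, in particular for continuous $X$. This is equivalently the case $c_{A, Q, L}=1$, $c_{A, Q, R}=0$ of Theorem \ref{111575}, with $g_{A, Q, L} = g_{A, Q}$, since the decomposition of Theorem 7 of \cite{Dhaene et al. (2012)} is trivial for a left-continuous function. Combining this with Theorem \ref{1230220} yields the corollary.

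The main obstacle is Step 1: the case where $G_{A, Q}$ is flat, which happens when some $a_k = 0$, must be handled explicitly, and the case split above does this. The remaining issue is the standing finiteness assumption on the risk measures, which is already part of the paper's conventions.
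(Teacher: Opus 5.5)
Your proposal is correct and follows the paper's route. The paper also takes $g_{A,Q}=g\circ\overline{G}_{A,Q}$, observes in the remark preceding the corollary that it inherits left-continuity from $g$, and then reads off the case $c_{A,Q,L}=1$, $c_{A,Q,R}=0$ of Theorem \ref{111575}, which is Theorem 6 of Dhaene et al.\ (2012); your Step 1 just makes explicit the monotonicity of $\overline{G}_{A,Q}$ that the paper's justification (``since $\overline{G}_{A,Q}(t)$ is always continuous'') leaves implicit.
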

\begin{proof} 
Corollary \ref{1115119} follows immediately from Theorem \ref{111575}. 
\end{proof}

\begin{cor} \label{1115124}
Let $A = (a_1, ..., a_m)^{'}$ and $Q = (q_1, ..., q_m)^{'}$ be two numerical vectors with $\min(a_1, ..., a_m) \geq 0, \max(a_1, ..., a_m) > 0$ and $0 = q_0 < q_1 < q_2 < ... < q_m = 1$. Then for any right-continuous distortion function $g$, there exists a right-continuous distortion function $g_{A, Q}$ such that for any continuous random variable $X$ and its step-weighted version $X_{A, Q}$ it holds that 
$$\rho_g[X_{A, Q}] =\int^1_0F_X^{-1+}(1 - q)dg_{A, Q}(q).$$
\end{cor}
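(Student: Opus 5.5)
The plan mirrors the argument for Corollary \ref{1115119}, with the roles of left and right swapped. Fix a right-continuous distortion function $g$. By Theorem \ref{1230220}, take $g_{A, Q}(t) = g(\overline{G}_{A, Q}(t)) = g(1 - G_{A, Q}(1-t))$ as in (\ref{2611322}); then $\rho_g[X_{A, Q}] = \rho_{g_{A, Q}}[X]$ for every continuous random variable $X$. This reduces the statement to two things: showing that $g_{A, Q}$ is right-continuous, and applying the right-continuous integral representation of \cite{Dhaene et al. (2012)}.

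For right-continuity, first note that $G_{A, Q}$ in (\ref{73366}) is continuous and non-decreasing on $[0,1]$, being piecewise linear with nonnegative slopes $a_k/a$ that agree at the nodes $q_k$. Hence $t \mapsto \overline{G}_{A, Q}(t) = 1 - G_{A, Q}(1-t)$ is also continuous and non-decreasing. Take $t_n \downarrow t$. Then $s_n := \overline{G}_{A, Q}(t_n)$ is non-increasing and converges to $s := \overline{G}_{A, Q}(t)$, with $s_n \geq s$.

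If $s_n > s$ eventually, right-continuity of $g$ at $s$ gives $g(s_n) \to g(s)$. If $s_n = s$ for some $n$, then by monotonicity $s_n = s$ for all larger $n$, and $g(s_n) = g(s)$ trivially. Monotonicity of $\overline{G}_{A, Q}$ is essential here: a decreasing inner map would turn right-continuity into left-continuity. In both cases $g_{A, Q}(t_n) \to g_{A, Q}(t)$, so $g_{A, Q}$ is a right-continuous distortion function. It depends only on $g$, $A$ and $Q$, not on $X$.

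Finally, Theorem 4 of \cite{Dhaene et al. (2012)} applies to any right-continuous distortion function, giving
$$\rho_{g_{A, Q}}[X] = \int_{[0,1]} F_X^{-1+}(1-q)\, dg_{A, Q}(q).$$
Combining this with $\rho_g[X_{A, Q}] = \rho_{g_{A, Q}}[X]$ yields the claim. Equivalently, in Theorem \ref{111575} one may take $c_{A, Q, L} = 0$, $c_{A, Q, R} = 1$ and $g_{A, Q, R} = g_{A, Q}$, since the decomposition of Theorem 7 of \cite{Dhaene et al. (2012)} is trivial for a right-continuous function.

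The only step needing care is the continuity transfer in the second paragraph. It is immediate once one notes that $\overline{G}_{A, Q}$ is continuous and non-decreasing, so right limits in $t$ map to right limits, or eventual constancy, in the argument of $g$. Everything else is direct citation.
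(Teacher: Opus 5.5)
Your proposal is correct and follows essentially the same route as the paper. Like the paper, you compose $g$ with the continuous, non-decreasing map $\overline{G}_{A,Q}$ to get a right-continuous distortion function $g_{A,Q}$. You then apply the right-continuous Lebesgue--Stieltjes representation (Theorem 4 of Dhaene et al.\ (2012)), which is the same as Theorem \ref{111575} with $c_{A,Q,L}=0$ and $c_{A,Q,R}=1$. Your explicit check that right-continuity carries over is a welcome addition: as you note, it needs $\overline{G}_{A,Q}$ to be monotone as well as continuous, whereas the paper's remark mentions only continuity.
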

\begin{proof} 
Corollary \ref{1115124} follows immediately from Theorem \ref{111575}. 
\end{proof} 
By Corollary \ref{1115119} we obtain the following result, concerning VaR and TVaR of the step-weighted distribution. 
\begin{cor} \label{73264}
Let $A = (a_1, ..., a_m)^{'}$ and $Q = (q_1, ..., q_m)^{'}$ be two numerical vectors with $\min(a_1, ..., a_m) \geq 0, \max(a_1, ..., a_m) > 0$ and $0 = q_0 < q_1 < q_2 < ... < q_m = 1$. Then, for any continuous random variable $X$ and for any $p \in (0, 1)$ it holds that $$\text{VaR}_p[X_{A, Q}] = \text{VaR}_{p^{*}}[X]$$ and $$\text{TVaR}_p[X_{A, Q}] = 
\frac{a_{l}(1 - p^{*})}{a(1 - p)}\text{TVaR}_{p^{*}}[X] + \sum^{m - 1}_{k = l}\frac{(a_{k + 1} - a_{k})(1 - q_k)}{a(1 - p)}\text{TVaR}_{q_k}[X]$$ with $p^{*} = G^{-1}_{A, Q}(p) \in [q_{l - 1}, q_{l}]$ for some $l \in \{1, ..., m\}$ and $\sum_{k = m}^{m - 1} = 0$ by convention. 
\end{cor}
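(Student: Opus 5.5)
The plan is to use Lemma \ref{1231219}, $F_{X_{A, Q}} = G_{A, Q}\circ F_X$, directly for both statements. The alternative route through Corollary \ref{1115119}, with the VaR and TVaR distortion functions, would give the same result. Throughout, write $G = G_{A, Q}$. Note that $G$ is continuous and non-decreasing, with slope $a_k/a$ on $[q_{k-1}, q_k]$. It may have flat pieces where $a_k = 0$, so $G^{-1}$ is understood as the left inverse.

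\textbf{VaR.} For a continuous non-decreasing $G$ and $p\in(0,1)$, I first check the equivalence $G(u) \geq p \iff u \geq G^{-1}(p) = p^*$. This holds because $\{u : G(u)\ge p\}$ is a closed up-interval, by continuity and monotonicity. Hence
$$\text{VaR}_p[X_{A, Q}] = \inf\{x : G(F_X(x))\ge p\} = \inf\{x : F_X(x)\ge p^*\} = F_X^{-1}(p^*) = \text{VaR}_{p^*}[X].$$
Continuity of $G$ also gives $G(p^*) = p$, and $p^*\in(0,1)$. So $p^*$ lies in some $[q_{l-1}, q_l]$, and this fixes $l$.

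\textbf{TVaR.} Start from $\text{TVaR}_p[Y] = \frac{1}{1-p}\int_p^1 F_Y^{-1}(u)\,du$. By the VaR step, $F_{X_{A, Q}}^{-1}(u) = F_X^{-1}(G^{-1}(u))$. Next, apply the change of variables $u = G(v)$. More precisely, the image of the Lebesgue–Stieltjes measure $dG$ on $[p^*, 1]$ under $G$ is Lebesgue measure on $[p, 1]$. Also, $G^{-1}(G(v))$ differs from $v$ only on flat pieces of $G$, which are $dG$-null. Together these give
$$\int_p^1 F_X^{-1}(G^{-1}(u))\,du = \int_{p^*}^1 F_X^{-1}(v)\,dG(v) = \frac{1}{a}\Big(a_l\int_{p^*}^{q_l}F_X^{-1}(v)\,dv + \sum_{k=l+1}^m a_k\int_{q_{k-1}}^{q_k}F_X^{-1}(v)\,dv\Big).$$
Now write each piece as a difference of tails, $\int_x^y F_X^{-1} = (1-x)\text{TVaR}_x[X] - (1-y)\text{TVaR}_y[X]$, using the convention $(1-q_m)\text{TVaR}_{q_m}=0$. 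Summing by parts, the coefficient of $(1-q_k)\text{TVaR}_{q_k}[X]$ for $k=l,\dots,m-1$ is $a_{k+1}-a_k$, and the leading term is $a_l(1-p^*)\text{TVaR}_{p^*}[X]$. Dividing by $1-p$ gives the stated formula.

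\textbf{Main obstacle.} The delicate step is justifying the change of variables when some $a_k=0$, and when $p^*$ sits at an endpoint of a flat piece. The plan here is to argue via the pushforward identity for $dG$ rather than via a pointwise derivative. This requires checking that $\{v: G^{-1}(G(v))\neq v\}$ is a countable union of intervals of constancy of $G$, each of zero $dG$-measure. Integrability of $F_X^{-1}$ near $1$ is covered by the standing finiteness assumption.
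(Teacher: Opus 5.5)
Your proof is correct, but it takes a different route from the paper. The paper proves both identities by specializing Corollary \ref{1115119} to the distortions $g(q)=\mathbb{I}(q>1-p)$ and $g(q)=\min\{q/(1-p),1\}$. That route passes through Theorem \ref{1230220} and the Lebesgue--Stieltjes representation of Dhaene et al. The paper then simply asserts that $\int_0^1F_X^{-1}(1-q)\,dg_{A,Q}(q)$ equals $\text{VaR}_{p^*}[X]$, respectively $\frac{1}{1-p}\int_{p^*}^1F_X^{-1}(t)\,dG_{A,Q}(t)$. It leaves the final regrouping into TVaR terms implicit.

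You bypass the distortion machinery and apply Lemma \ref{1231219} directly to quantile functions. This gives $F^{-1}_{X_{A,Q}}=F_X^{-1}\circ G^{-1}_{A,Q}$, and you then change variables in $\int_p^1F^{-1}_{X_{A,Q}}(u)\,du$. The two routes meet at the same integral $\frac{1}{1-p}\int_{p^*}^1F_X^{-1}\,dG_{A,Q}$. From there, your summation by parts is exactly the computation the paper omits.

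Your version buys self-containedness, and it makes explicit what the paper's ``follows immediately'' hides:
\begin{itemize}
\item The equivalence $G_{A,Q}(u)\ge p\iff u\ge p^*$ is precisely what one needs, in the paper's route, to see that the transformed VaR distortion is $g_{A,Q}(q)=\mathbb{I}(q>1-p^*)$.
\item Your handling of flat pieces (some $a_k=0$) shows that $G^{-1}_{A,Q}$ must be read as the left inverse for the VaR identity. The statement leaves this ambiguous; the TVaR formula is insensitive to the choice, since $dG_{A,Q}$ vanishes on flat pieces.
\end{itemize}

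The paper's route buys economy and fits its narrative. The corollary becomes a pure specialization of the general distortion result. It also exhibits structurally that step-weighting carries the $\text{VaR}_p$ distortion to the $\text{VaR}_{p^*}$ distortion.
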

\begin{proof}
Corollary \ref{73264} follows immediately from Corollary \ref{1115119}. Indeed, for any fixed $p \in (0, 1)$, we consider the function $g$ defined by 
$$g(q) = \mathbb{I}(q > 1 - p), \,\, q \in [0, 1].$$ 
Then, it is easy to check that $g$ is a left-continuous distortion function with $$\rho_g[Y] = \text{VaR}_p[Y]$$ for any random variable $Y$, see \cite{DDGK} and \cite{Dhaene et al. (2012)} for the details. This combining Corollary \ref{1115119} implies that for any continuous random variable $X$ it holds that $$\text{VaR}_p[X_{A, Q}] = \text{VaR}_{p^{*}}[X], \,\, \text{for any} \,\, p \in (0, 1).$$ One can easily check that $p^{*} =G_{A, Q}^{-1}(p) \in (0, 1)$ for any $p \in (0, 1)$. 

For the TVaR of the step-weighted distribution at any probability level $p \in (0, 1)$, we consider the function $g$ defined by $$g(q) = \min\Big\{\frac{q}{1 - p}, 1\Big\}, \,\, q \in [0, 1].$$ Then, it is easy to check that $g$ is a continuous distortion function with $$\rho_g[Y] = \text{TVaR}_p[Y]$$ for any random variable $Y$, see \cite{DDGK} and \cite{Dhaene et al. (2012)} for instance. This combining Corollary \ref{1115119} implies that for any continuous random variable $X$ and any $p \in (0, 1)$ it holds that 
\begin{align*} 
\text{TVaR}_p[X_{A, Q}] &= \int^1_0F_X^{-1}(1 - q)dg_{A, Q}(q) \\ 
&= \frac{1}{1 - p}\int^{1}_{p^{*}}F_X^{-1}(t)dG_{A, Q}(t) \\ 
&= \frac{a_{l}(1 - p^{*})}{a(1 - p)}\text{TVaR}_{p^{*}}[X] + \sum^{m - 1}_{k = l}\frac{(a_{k + 1} - a_{k})(1 - q_k)}{a(1 - p)}\text{TVaR}_{q_k}[X]
\end{align*}
with $p^{*} = G^{-1}_{A, Q}(p) \in [q_{l - 1}, q_{l}]$ for some $l \in \{1, ..., m\}$ and $\sum_{k = m}^{m - 1} = 0$ by convention. This completes the proof. 
\end{proof}

To end this section, we present below an example in order to illustrate the main findings of this section. Moreover, this example will be useful when we consider comonotonic and moment matching approximations for sums of lognormal random variables. The interested reader is referred to \cite{Huang2026}.
\begin{ex} 
Let $A = (a_1, ..., a_m)^{'}$ and $Q = (q_1, ..., q_m)^{'}$ be two numerical vectors with $\min(a_1, ..., a_m) \geq 0, \max(a_1, ..., a_m) > 0$ and $0 = q_0 < q_1 < q_2 < ... < q_m = 1$. Let $X \sim LN(\mu, \sigma^2)$ be a log-normally distributed random variable with parameters $\mu \in \mathbb{R}$ and $\sigma > 0$ and let $X_{A, Q}$ be the step-weighted version of $X$ associated with the two numerical vectors $A$ and $Q$. Then, for any $p \in (0, 1)$, by Corollary \ref{73264} we have that $$\text{VaR}_p[X_{A, Q}] = e^{\mu + \sigma\Phi^{-1}(p^{*})}$$ and 
\begin{align*} 
\text{TVaR}_p[X_{A, Q}] 
&= \frac{e^{\mu + \sigma^2/2}}{a(1 - p)}\Bigg(a_{l}\Phi(\sigma - \Phi^{-1}(p^{*})) + \sum^{m - 1}_{k = l}(a_{k + 1} - a_{k})\Phi(\sigma - \Phi^{-1}(q_k))\Bigg)
\end{align*}
with $p^{*} = G^{-1}_{A, Q}(p) \in [q_{l - 1}, q_{l}]$ for some $l \in \{1, ..., m\}$ and $\sum_{k = m}^{m - 1} = 0$ by convention. 
\end{ex}

\section{Distortion risk measures of step-weighted comonotonic sums}\label{26538}
In this section, we investigate distortion risk measures of step-weighted comonotonic sums. Let $(X_1, ..., X_n)^{'}$ be a random vector of dimension $n$ and $\omega$ be a weight function. Then, we call $((X_1)_{\omega}, ..., (X_n)_{\omega})^{'}$ the weighted version of the random vector $(X_1, ..., X_n)^{'}$ associated with the weight function $\omega$. As usual, we denote the comonotonic sum of the components of $(X_1, ..., X_n)^{'}$ by $$S^c = X_1^c + ... + X_n^c$$ and the weighted version of $S^c$ associated with the weight function $\omega$ by $(S^c)_{\omega}$. The sum and the comonotonic sum of the components of the weighted random vector $((X_1)_{\omega}, ..., (X_n)_{\omega})^{'}$ will be denoted by 
$$S_{\omega} = (X_1)_{\omega} + ... + (X_n)_{\omega}$$ and $$(S_{\omega})^c = ((X_1)_{\omega})^c + ... + ((X_n)_{\omega})^c$$ respectively. Comonotonicity has been extensively discussed in \cite{Dhaene et al. (2002a)} and \cite{Dhaene et al. (2002b)}. For its applications in finance and actuarial science we recommend the reader to refer to \cite{DDV2011}, \cite{DVG2005}, \cite{DWY2000}, \cite{KDG2000}, \cite{VCD2008} and \cite{WD1998}. 

Then, we introduce the following fundamental result. For the proof see Theorem 5.1 of \cite{Huang2026}.
\begin{lem} \label{0103383}
Let $A = (a_1, ..., a_m)^{'}$ and $Q = (q_1, ..., q_m)^{'}$ be two numerical vectors with $\min(a_1, ..., a_m) \geq 0, \max(a_1, ..., a_m) > 0$ and $0 = q_0 < q_1 < q_2 < ... < q_m = 1$. Then for any continuous random variables $X_1, ..., X_n$ we have that $$(S^c)_{\omega} \stackrel{d}{=} (S_{\omega})^c$$
where $\stackrel{d}{=}$ stands for equality in distribution and $X_{\omega}$ denotes the step-weighted version of a continuous random $X$ associated with the step-weight function $\omega(x) = \omega_{A, Q, X}(x)$. 
\end{lem}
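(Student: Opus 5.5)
The plan is to compare the two distributions through their left quantile functions on $(0,1)$, since these determine a law completely. The two ingredients are Lemma \ref{1231219}, which says that step-weighting acts on distribution functions by $F \mapsto G_{A,Q}\circ F$, and the quantile additivity of comonotonic sums, namely $F^{-1}_{Y_1^c+\cdots+Y_n^c}(p)=\sum_i F^{-1}_{Y_i}(p)$ for $p\in(0,1)$ (see \cite{Dhaene et al. (2002a)}).

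\textbf{Step 1: $S^c$ is continuous.} A cdf $F$ is continuous iff $F^{-1}$ is strictly increasing on $(0,1)$. Each $F^{-1}_{X_i}$ is therefore strictly increasing, and hence so is $F^{-1}_{S^c}=\sum_i F^{-1}_{X_i}$. Thus $S^c$ is continuous, so $(S^c)_{\omega}$ is well defined and Lemma \ref{1231219} applies to it.

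\textbf{Step 2: a quantile composition identity.} For any continuous random variable $Y$ and $p\in(0,1)$, I will show
$$F^{-1}_{Y_{\omega}}(p)=F^{-1}_Y\bigl(G^{-1}_{A,Q}(p)\bigr),$$
where $G^{-1}_{A,Q}$ is the left inverse. By Lemma \ref{1231219}, $F_{Y_\omega}=G_{A,Q}\circ F_Y$. Since $G_{A,Q}$ is continuous and nondecreasing, the set $\{t: G_{A,Q}(t)\ge p\}$ is a closed up-ray, so $G_{A,Q}(t)\ge p \iff t\ge G^{-1}_{A,Q}(p)$. By right-continuity of $F_Y$, $F_Y(x)\ge q\iff x\ge F_Y^{-1}(q)$. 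Chaining the two equivalences gives $G_{A,Q}(F_Y(x))\ge p \iff x\ge F_Y^{-1}(G^{-1}_{A,Q}(p))$, which is the identity. It also holds on flat pieces of $G_{A,Q}$, which occur when some $a_k=0$.

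\textbf{Step 3: compare the two sides.} Applying Step 2 to $Y=S^c$ and using additivity,
$$F^{-1}_{(S^c)_\omega}(p)=\sum_i F^{-1}_{X_i}\bigl(G^{-1}_{A,Q}(p)\bigr).$$
Applying Step 2 to each $Y=X_i$ and then additivity for the comonotonic vector $((X_1)_\omega,\dots,(X_n)_\omega)$,
$$F^{-1}_{(S_\omega)^c}(p)=\sum_i F^{-1}_{(X_i)_\omega}(p)=\sum_i F^{-1}_{X_i}\bigl(G^{-1}_{A,Q}(p)\bigr).$$
The two quantile functions agree on $(0,1)$, hence $(S^c)_\omega\stackrel{d}{=}(S_\omega)^c$.

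\textbf{Main obstacle.} The delicate point is Step 2 when $G_{A,Q}$ is not strictly increasing. The closedness argument handles it, but $G^{-1}_{A,Q}(p)$ must stay in $(0,1)$ for $p\in(0,1)$ so that the additivity formula applies. This holds because $G_{A,Q}(0)=0<p$ and $G_{A,Q}$ is continuous, so $G^{-1}_{A,Q}(p)>0$; and $G^{-1}_{A,Q}(p)\le\inf\{t: G_{A,Q}(t)=1\}$. If that infimum equals $1$, then $G_{A,Q}(t)<1$ for all $t<1$, so continuity and $p<1$ give some $t<1$ with $G_{A,Q}(t)\ge p$, and again $G^{-1}_{A,Q}(p)<1$.
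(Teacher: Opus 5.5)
Your proof is correct. The paper gives no argument for this lemma and only points to Theorem 5.1 of \cite{Huang2026}, so there is no in-text proof to compare against. Your argument is a self-contained proof built from the tools the paper does state: Lemma~\ref{1231219} and quantile additivity for comonotonic sums.

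Your Step~2 identity $F^{-1}_{Y_\omega}(p)=F^{-1}_Y\bigl(G^{-1}_{A,Q}(p)\bigr)$ is exactly the statement $\text{VaR}_p[Y_{A,Q}]=\text{VaR}_{p^{*}}[Y]$ of Corollary~\ref{73264}. The paper reaches that only indirectly, through the distortion $g(q)=\mathbb{I}(q>1-p)$ and Corollary~\ref{1115119}. Your direct derivation, chaining the two equivalences $G_{A,Q}(t)\ge p\iff t\ge G^{-1}_{A,Q}(p)$ and $F_Y(x)\ge q\iff x\ge F^{-1}_Y(q)$, is shorter. It also makes the behaviour on flat pieces of $G_{A,Q}$ (some $a_k=0$) transparent. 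Citing Corollary~\ref{73264} instead would not have been circular, since that chain never uses the present lemma.

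Your Step~1 supplies something the paper leaves implicit. Under Definition~\ref{529103}, $(S^c)_\omega$ is only defined once $S^c$ is known to be continuous, and your strictly-increasing-quantile argument settles this.

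One simplification: the case split in your final paragraph is unnecessary. Continuity of $G_{A,Q}$ at $1$, together with $G_{A,Q}(1)=1>p$, already gives some $t<1$ with $G_{A,Q}(t)\ge p$, hence $G^{-1}_{A,Q}(p)<1$.
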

\noindent From Lemma \ref{0103383} we can conclude that the step-weighted version $(S^c)_{\omega}$ of the comonotonic sum $S^c$ of continuous random variables $X_1, ..., X_n$ is equal in distribution to the comonotonic sum $(S_{\omega})^c$ of the step-weighted versions of $X_1, ..., X_n$. In other words, the operation of comonotonic summing interchanges with the operation of step-weighting for continuous random variables. By Lemma \ref{0103383} we have the following result, which is useful when we want to compute distortion risk measures of the step-weighted comonotonic sum of continuous random variables.

\begin{thm} \label{73295}
Let $A = (a_1, ..., a_m)^{'}$ and $Q = (q_1, ..., q_m)^{'}$ be two numerical vectors with $\min(a_1, ..., a_m) \geq 0, \max(a_1, ..., a_m) > 0$ and $0 = q_0 < q_1 < q_2 < ... < q_m = 1$. Let $g$ be any fixed distortion function and define 
\begin{equation*}
g_{A, Q}(t) = g(\overline{G}_{A, Q}(t)) = g(1 - G_{A, Q}(1 - t)), \,\, \text{for} \,\, t \in [0, 1]
\end{equation*}
where $G_{A, Q}(t)$ is the continuous distortion function defined by (\ref{73366}). 
Then for any continuous random variables $X_1, ..., X_n$ we have that $$\rho_g[(S^c)_{A, Q}] = \sum^n_{i = 1}\rho_{g_{A, Q}}[X_i].$$
\end{thm}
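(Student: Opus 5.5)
The plan is to combine Theorem \ref{1230220} with comonotonic additivity of distortion risk measures. There are two natural routes, and I would use the more direct one.

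\emph{Route 1 (direct).} First I would check that $S^c = X_1^c + \cdots + X_n^c$ is a continuous random variable whenever $X_1, \ldots, X_n$ are continuous. Let $U$ be uniform on $(0,1)$. Then $S^c \stackrel{d}{=} \sum_{i=1}^n F_{X_i}^{-1}(U)$, which is a nondecreasing function of $U$. It is nonconstant on every subinterval of $(0,1)$, since each $F_{X_i}^{-1}$ is strictly increasing when $F_{X_i}$ is continuous. Hence $S^c$ has no atoms, so $F_{S^c}$ is continuous.

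With this in hand, Theorem \ref{1230220} applies to $X = S^c$, with the same $g_{A,Q}$ as in (\ref{1228218}). That $g_{A,Q}$ depends only on $g$, $A$ and $Q$, not on $X$, which is exactly the function named in the statement. This gives $\rho_g[(S^c)_{A,Q}] = \rho_{g_{A,Q}}[S^c]$.

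Next I would invoke the well-known additivity of distortion risk measures for comonotonic sums, $\rho_h[S^c] = \sum_i \rho_h[X_i]$, valid for every distortion function $h$ (Dhaene et al.\ (2012), via the decomposition into left- and right-continuous parts used in Theorem \ref{111575}). Applying it with $h = g_{A,Q}$ finishes the proof.

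\emph{Route 2 (alternative check).} Lemma \ref{0103383} gives $(S^c)_{A,Q} \stackrel{d}{=} \sum_i ((X_i)_{A,Q})^c$. Distortion risk measures are law-invariant, so comonotonic additivity for $g$ yields $\rho_g[(S^c)_{A,Q}] = \sum_i \rho_g[(X_i)_{A,Q}]$. Theorem \ref{1230220} applied to each continuous $X_i$ then gives $\rho_g[(X_i)_{A,Q}] = \rho_{g_{A,Q}}[X_i]$. This route avoids the continuity check on $S^c$ entirely, so I would probably present it as the main argument and note Route 1 as a consistency check.

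The only delicate point is citing comonotonic additivity for a general, not necessarily left- or right-continuous, distortion function. The argument is to write $g = c_L g_L + c_R g_R$ (Theorem 7 of Dhaene et al.\ (2012)). For $g_L$ one uses $F^{-1}_{S^c} = \sum_i F^{-1}_{X_i}$, and for $g_R$ the analogous identity for right inverses, together with the integral representations (Theorems 4 and 6 of Dhaene et al.\ (2012)). Finiteness of all risk measures is assumed throughout, as stated in the preliminaries.
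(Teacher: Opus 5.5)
Your Route 2 is exactly the paper's proof. Lemma~\ref{0103383} together with comonotonic additivity for $g$ gives $\rho_g[(S^c)_{A,Q}]=\sum_i\rho_g[(X_i)_{A,Q}]$, and Theorem~\ref{1230220} is then applied to each $X_i$.

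Your Route 1 is a genuinely different and equally valid argument. It applies Theorem~\ref{1230220} once, to $X=S^c$, and then uses comonotonic additivity for the transformed distortion $g_{A,Q}$ rather than for $g$.

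\emph{What Route 1 buys.} It does not need the interchange Lemma~\ref{0103383}, which the paper imports from the companion paper without proof. Route 2 is shorter once that lemma is granted; Route 1 is more self-contained. The only additional input is continuity of $S^c$, and your argument for it is correct. For continuous $F_{X_i}$ one has $F_{X_i}(F^{-1}_{X_i}(p))=p$ for $p\in(0,1)$. So each $F^{-1}_{X_i}$ is strictly increasing on $(0,1)$, hence so is $\sum_i F^{-1}_{X_i}$, and $S^c$ has no atoms.

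\emph{The continuity check is not an extra cost.} Definitions~\ref{52378} and~\ref{529103} only define step-weighted versions of continuous random variables. Continuity of $S^c$ is therefore needed just for $(S^c)_{A,Q}$ to make sense. Your claim that Route 2 ``avoids the continuity check entirely'' is not quite right; the paper simply leaves it implicit.

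\emph{Route 1 also recovers Lemma~\ref{0103383}.} Take the VaR distortions $g(q)=\mathbb{I}(q>1-p)$. Route 1 then shows that $(S^c)_{A,Q}$ has quantile function $p\mapsto\sum_i F^{-1}_{X_i}(G^{-1}_{A,Q}(p))$ on $(0,1)$. Comonotonic additivity plus Theorem~\ref{1230220} applied to each $X_i$ gives the same quantile function for $(S_{A,Q})^c$, so the two laws coincide.
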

\begin{proof} 
By Lemma \ref{0103383} and the well-known additivity of distortion risk measures for comonotonic sums we have that $$\rho_g[(S^c)_{A, Q}] = \rho_g[(S_{A, Q})^c] = \sum^n_{i = 1}\rho_g[(X_i)_{A, Q}].$$ From Theorem \ref{1230220} it follows that $$\rho_g[(S^c)_{A, Q}] = \sum^n_{i = 1}\rho_{g_{A, Q}}[X_i].$$ This completes the proof. 
\end{proof}
From Theorem \ref{73295} we see that associated with a distortion function $g$ the distortion risk measure $\rho_g[(S^c)_{A, Q}]$ of the step-weighted comonotonic sum $(S^c)_{A, Q}$ of any continuous random variables $X_1, ..., X_n$ can be expressed as the sum of the distortion risk measures of $X_1, ..., X_n$ associated with the distortion function $g_{A, Q}$.

\begin{cor} \label{73305}
Let $A = (a_1, ..., a_m)^{'}$ and $Q = (q_1, ..., q_m)^{'}$ be two numerical vectors with $\min(a_1, ..., a_m) \geq 0, \max(a_1, ..., a_m) > 0$ and $0 = q_0 < q_1 < q_2 < ... < q_m = 1$. Then, for any continuous random variables $X_1, ..., X_n$ and for any $p \in (0, 1)$ we have that $$\text{VaR}_p[(S^c)_{A, Q}] = \sum^n_{i = 1}\text{VaR}_{p^{*}}[X_i]$$ and $$\text{TVaR}_p[(S^c)_{A, Q}] = \sum^n_{i = 1}\Bigg(\frac{a_{l}(1 - p^{*})}{a(1 - p)}\text{TVaR}_{p^{*}}[X_i] + \sum^{m - 1}_{k = l}\frac{(a_{k + 1} - a_{k})(1 - q_k)}{a(1 - p)}\text{TVaR}_{q_k}[X_i]\Bigg)
$$ with $p^{*} = G^{-1}_{A, Q}(p) \in [q_{l - 1}, q_{l}]$ for some $l \in \{1, ..., m\}$ and $\sum_{k = m}^{m - 1} = 0$ by convention. 
\end{cor}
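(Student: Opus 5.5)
Corollary \ref{73305} is a direct consequence of Theorem \ref{73295} combined with the single-variable computations in the proof of Corollary \ref{73264}. Fix $p \in (0,1)$ and set $p^{*} = G^{-1}_{A,Q}(p)$, with $l$ chosen so that $p^{*} \in [q_{l-1}, q_l]$. We apply Theorem \ref{73295} twice, once for each of two specific distortion functions.

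For the VaR statement, take $g(q) = \mathbb{I}(q > 1-p)$, so that $\rho_g = \text{VaR}_p$. Theorem \ref{73295} gives $\text{VaR}_p[(S^c)_{A,Q}] = \sum_{i=1}^n \rho_{g_{A,Q}}[X_i]$. Each summand satisfies $\rho_{g_{A,Q}}[X_i] = \rho_g[(X_i)_{A,Q}]$ by Theorem \ref{1230220}, and this equals $\text{VaR}_{p^{*}}[X_i]$ by Corollary \ref{73264}. Summing over $i$ gives the first formula.

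The one point to check is that $g_{A,Q}$ is the same distortion function in Theorem \ref{73295} as in Theorem \ref{1230220}. Both are defined by $g(\overline{G}_{A,Q})$, with $G_{A,Q}$ depending only on $A$ and $Q$, not on $X_i$. So the identification holds uniformly in $i$. As a direct check, $g_{A,Q}(t) = \mathbb{I}(1 - G_{A,Q}(1-t) > 1-p) = \mathbb{I}(1-t < p^{*})$, using continuity and monotonicity of $G_{A,Q}$. This is exactly the VaR distortion at level $p^{*}$.

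For the TVaR statement, take $g(q) = \min\{q/(1-p), 1\}$. Again by Theorem \ref{73295}, $\text{TVaR}_p[(S^c)_{A,Q}] = \sum_i \rho_{g_{A,Q}}[X_i]$. Each term equals $\text{TVaR}_p[(X_i)_{A,Q}]$, which Corollary \ref{73264} expresses as
\[
\frac{a_l(1-p^{*})}{a(1-p)}\text{TVaR}_{p^{*}}[X_i] + \sum_{k=l}^{m-1}\frac{(a_{k+1}-a_k)(1-q_k)}{a(1-p)}\text{TVaR}_{q_k}[X_i].
\]
Summing over $i$ yields the claim.

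I expect no real obstacle: the work has already been done in Corollary \ref{73264}. The only care needed is that $p^{*}$ and $l$ depend only on $p$, $A$ and $Q$, and not on $i$, so the same coefficients appear in every summand.
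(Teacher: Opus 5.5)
Your proposal is correct and follows the paper's proof. Both plug the distortions $\mathbb{I}(q>1-p)$ and $\min\{q/(1-p),1\}$ into Theorem \ref{73295} and identify each summand $\rho_{g_{A,Q}}[X_i]$ with the single-variable VaR/TVaR expression from Corollary \ref{73264}; you simply make explicit, via Theorem \ref{1230220} and your check $g_{A,Q}(t)=\mathbb{I}(t>1-p^{*})$ (valid with $p^{*}$ taken as the left inverse of $G_{A,Q}$), the step the paper leaves implicit.
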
 
\begin{proof} 
Corollary \ref{73305} follows immediately from Theorem \ref{73295}. Indeed, for any fixed $p \in (0, 1)$, we consider the function $g$ defined by 
$$g(q) = \mathbb{I}(q > 1 - p), \,\, q \in [0, 1].$$ 
Then, it is easy to check that $g$ is a left-continuous distortion function with $$\rho_g[Y] = \text{VaR}_p[Y]$$ for any random variable $Y$, see \cite{DDGK} and \cite{Dhaene et al. (2012)} for the details. This combining Theorem \ref{73295} implies that for any continuous random variables $X_1, ..., X_n$ it holds that 
\begin{align*} 
\text{VaR}_p[(S^c)_{A, Q}] &= \rho_g[(S^c)_{A, Q}] = \sum^n_{i = 1}\rho_{g_{A, Q}}[X_i] = \sum^n_{i = 1}\text{VaR}_{p^{*}}[X_i]. 
\end{align*}
One can easily check that $p^{*} =G_{A, Q}^{-1}(p) \in (0, 1)$ for any $p \in (0, 1)$.

For the TVaR of the step-weighted comonotonic sum $(S^c)_{A, Q}$ at any probability level $p \in (0, 1)$, we consider the function $g$ defined by $$g(q) = \min\Big\{\frac{q}{1 - p}, 1\Big\}, \,\, q \in [0, 1].$$ Then, it is easy to check that $g$ is a continuous distortion function with $$\rho_g[Y] = \text{TVaR}_p[Y]$$ for any random variable $Y$, see \cite{DDGK} and \cite{Dhaene et al. (2012)} for instance. This combining Theorem \ref{73295} implies that for any continuous random variables $X_1, ..., X_n$ it holds that 
\begin{align*} 
\text{TVaR}_p[(S^c)_{A, Q}] &= \rho_g[(S^c)_{A, Q}] = \sum^n_{i = 1}\rho_{g_{A, Q}}[X_i] \\ 
&= \sum^n_{i = 1}\Bigg(\frac{a_{l}(1 - p^{*})}{a(1 - p)}\text{TVaR}_{p^{*}}[X_i] + \sum^{m - 1}_{k = l}\frac{(a_{k + 1} - a_{k})(1 - q_k)}{a(1 - p)}\text{TVaR}_{q_k}[X_i]\Bigg)
\end{align*}
with $p^{*} = G^{-1}_{A, Q}(p) \in [q_{l - 1}, q_{l}]$ for some $l \in \{1, ..., m\}$ and $\sum_{k = m}^{m - 1} = 0$ by convention. This completes the proof. 
\end{proof}

To end this section, we present below two examples to illustrate the main findings of this section. We remark that the results of Example \ref{74351} will be useful when considering comonotonic and moment matching approximations for sums of lognormal random variables. The interested reader is referred to \cite{Huang2026}.

\begin{ex}
Let $A = (\frac{1}{2}, \frac{11}{6}, \frac{5}{6})^{'}$ and $Q = (\frac{1}{4}, \frac{1}{2}, 1)^{'}$. Let $X_1$ and $X_2$ be two random variables uniformly distributed over the unit interval $(0, 1)$. Consider the comonotonic sum $S^c = X_1^c + X_2^c$ where $(X_1^c, X_2^c)^{'}$ is a comonotonic modification of $(X_1, X_2)^{'}$. Let $(S^c)_{A, Q}$ be the step-weighted version of $S^c$ associated with the two numerical vectors $A$ and $Q$. By Corollary \ref{73305} we have the following table containing VaR's and TVaR's of the step-weighted comonotonic sum $(S^c)_{A, Q}$. 
\begin{figure} [!htbp] 
\centering 
\includegraphics[width = 0.75 \textwidth]{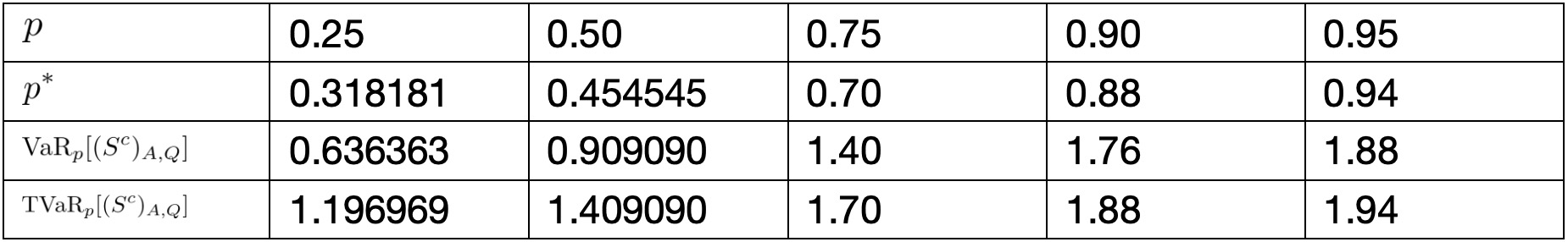} 
\caption*{Table 1: \text{VaR}'s and \text{TVaR}'s of $(S^c)_{A, Q}$.} 
\label{}
\end{figure}
One can directly check the table by the definition of $\text{VaR}$ and $\text{TVaR}$. Indeed, by the well-known additivity of quantiles for comonotonic sums we have that $S^c \sim U(0, 2)$, which implies that the step-weighted distribution of $S^c$ associated with the step-weight function $\omega = \omega_{A, Q}$ is given by $$F_{(S^c)_\omega}(x)\begin{cases} \frac{1}{4}x & x \in [0, \frac{1}{2}) \\ 
\frac{11}{12}(x - \frac{1}{2}) + \frac{1}{8} & x \in [\frac{1}{2}, 1) \\ 
\frac{5}{12}(x - 1) + \frac{7}{12} & x \in [1, 2]
\end{cases}.$$ From this, the table follows immediately. 
\end{ex}

\begin{ex} \label{74351}
Let $A = (a_1, ..., a_m)^{'}$ and $Q = (q_1, ..., q_m)^{'}$ be two numerical vectors with $\min(a_1, ..., a_m) \geq 0, \max(a_1, ..., a_m) > 0$ and $0 = q_0 < q_1 < q_2 < ... < q_m = 1$. Let $X_i \sim LN(\mu_i, \sigma_i^2)$ be a log-normally distributed random variable with parameters $\mu_i \in \mathbb{R}$ and $\sigma_i > 0$, $i = 1, ..., n$. Consider the comonotonic sum $S^c = X_1^c + ... + X_n^c$ of $X_1, ..., X_n$ and its step-weighted version $(S^c)_{A, Q}$ associated with the two numerical vectors $A$ and $Q$. Then, for any $p \in (0, 1)$, by Corollary \ref{73305} we have that 
$$\text{VaR}_p[(S^c)_{A, Q}] = \sum^n_{i = 1}e^{\mu_i + \sigma_i\Phi^{-1}(p^{*})}$$ and 
\begin{align*} 
\text{TVaR}_p[(S^c)_{A, Q}] 
&= \sum^n_{i = 1}\frac{e^{\mu_i + \sigma_i^2/2}}{a(1 - p)}\Bigg(a_{l}\Phi(\sigma_i - \Phi^{-1}(p^{*})) + \sum^{m - 1}_{k = l}(a_{k + 1} - a_{k})\Phi(\sigma_i - \Phi^{-1}(q_k))\Bigg)
\end{align*}
with $p^{*} = G^{-1}_{A, Q}(p) \in [q_{l - 1}, q_{l}]$ for some $l \in \{1, ..., m\}$ and $\sum_{k = m}^{m - 1} = 0$ by convention. These two expressions for the step-weighted comonotonic sum $(S^c)_{A, Q}$ will be useful when considering comonotonic and moment matching approximations for the sum of lognormal random variables. 
\end{ex}


\end{document}